\newtheorem{theorem}{Theorem}
\newtheorem{corollary}{Corollary}
\newtheorem{definition}{Definition}
\newtheorem{example}{Example}
\newtheorem{lemma}{Lemma}
\newtheorem{proposition}{Proposition}
\newtheorem{remark}{Remark}
\begin{document}


\title{B\'ezier surfaces with prescribed diagonals}

\author{A. Arnal\footnote{ Email: \texttt{ana.arnal@uji.es}. ORCID: 0000-0002-3283-3379}
	\and
	J. Monterde\footnote{Email: \texttt{monterde@uv.es}.  } }

\maketitle
\begin{center}
	Institut de Matem\`atiques i Aplicacions de Castell\'o (IMAC) and  De\-par\-ta\-ment de
	Ma\-te\-m\`a\-ti\-ques, Universitat Jaume I,
	E-12071 Cas\-te\-ll\'on, Spain.\\
	Dep. de Matem\`atiques, Universitat de Val\`encia,
	Burjassot (Val\`encia), Spain
\end{center}




\begin{abstract}
	The affine space of all tensor product B\'ezier patches of degree $n\times n$ with prescribed main diagonal curves is determined. First, the pair of B\'ezier curves which can be diagonals of a B\'ezier patch is characterized. Besides prescribing the diagonal curves, other related problems are considered, those where boundary curves or tangent planes along boundary curves are also prescribed.
\end{abstract}

\noindent
\textit{Keywords}: Tensor product B\'ezier surface; Diagonal curve; B\'ezier patch; boundary curve; Geometric Modeling.



\section{Introduction}

One of the paradigmatic problems in the realm of computer design is to build a surface once its boundary is prescribed. In the case of B\'ezier surfaces, there is an extensive literature dealing with the problem of building a Bézier surface with a prescribed boundary.

These kinds of method are fundamentally based on seeking the minimization of some functional with a defined geometric meaning in the space of all Bézier surfaces with that border. Examples of these functionals include the area functional, which gives rise to a non-linear problem, or quadratic functionals, such as the Dirichlet functional or similar. All of these methods can be easily implemented so that the designer is immediately able to visualize the result of any change to the initial data. 

Functional minimization can be performed if all the boundary curves are given, so the surface shape is quite controlled. Nevertheless, this is not in general true for PDE surfaces,  where the number of boundary curves that can be prescribed depends on the degree of the PDE (see \cite{mo} and related references). For example, in order to obtain harmonic surfaces only two boundary curves can be prescribed so the other two get out of control. Hence small changes in the prescribed data could produce big changes in the shape of the two free boundary curves.

Therefore, although the natural way of controlling the shape of a surface is through its boundary, a way of increasing control could be to fix some curves on the surface. For this purpose, the main diagonal curves of a tensor-product Bézier surface seems to be the best choice. So, the goal here is to give a method that allow prescribing not only the boundary curves but the diagonal curves of the surface.

In fact, diagonal curves play an important role in engineering applications of surface modeling, such as architectural design or the design and optimization of quad meshes in architectural geometry \cite{JWSP}. However, very few studies have been published on constrained modeling of tensor-product surfaces with diagonal curves, (see \cite{zhu} and \cite{zhu22}).

Two situations where controlling the diagonal curves of B\'ezier patches seems to be fundamental can be found in the literature. The first is related to conversion from rectangular patches into a triangular scheme.

Based on the concept of diagonal curve, \cite{SO} and \cite{Kolcun} respectively proposed S-Patch and BS-Patch. The authors of paper \cite{SSK}  added some requirements for the main diagonals of a Hermite bicubic patch to be able to adapt to a triangular scheme. The requirements are related to the degree of the diagonal curves. Indeed, as stated in \cite{Kolcun}, conversions between quadrilateral and triangular meshes need the degree of the diagonal curves to match the degree of the boundary curves of the quadrilateral meshes. Moreover, the reverse adaptation, from triangular to rectangular patches, would require building rectangular patches where not only the boundaries but also the diagonal curves are prescribed.

The second situation where diagonal curves play a main role is in the design of surfaces with some geometric property. In papers \cite{shsp} and \cite{shi} the authors deal with constant mean curvature surfaces. Such kind of surfaces allow parameterizations with isothermal coordinate lines. That is to say, the two families of coordinate lines are mutually orthogonal and tangent vectors at an intersection point are of the same length. In addition to these two families of coordinate lines, diagonal curves are used not only for better aesthetics, but also to improve structural stability. The structures that can be built using these methods could be modified as follows: once the mesh of coordinate lines and the diagonal lines are obtained, they can be used as prescribed data to obtain rectangular B\'ezier patches with additional properties.  

A more general situation can be found in papers \cite{MB} and \cite{MB4}, where the authors discuss the problem of adding features to a free form surface by applying one or several user-defined surface curves, which are seen as editable parameters. These user-prescribed curves can be diagonal curves, but are not necessarily so.

\bigskip
Since the surface control net is related to the diagonal curves control points, (see \cite{HF}), the point is that, in order to be prescribed, the boundary control points and the control points of the diagonal curves in particular must meet certain conditions that will be shown in this paper. In \cite{zhu}, the authors considered a similar problem, and provided a method for generating a surface from free boundaries and diagonal curves given by the user, that is, without taking into account that boundaries and diagonals are related, so they have to meet some conditions. They then proposed a correction of this given information using Lagrange method to minimize the corresponding differences.

Let us recall that when we prescribe $\mathcal{C}^0$-boundary conditions, that is, when the boundary of a $n\times n$ B\'ezier surface is prescribed, then $4n$ boundary control points are fixed from a total of $(n+1)^2$. Equivalently, we can say that the affine space of all B\'ezier surfaces with the same boundary is parameterized by the $(n-1)^2$ interior control points. Analogously, when we prescribe $\mathcal{C}^1$-boundary conditions, that is to say, when the boundary and the tangent planes along it are prescribed,  then the $4n$ boundary control points and the $4n-8$ control points adjacent to the boundary are fixed. Equivalently, the affine space of all B\'ezier surfaces with the same boundary and tangent planes along it is parameterized by the $(n-3)^2$ interior control points.

In a similar fashion, this paper deals with the following problem: What is the affine space of all B\'ezier surfaces with the same diagonals, or with the same boundaries and diagonals? In other words, is it possible to determine which control points are fixed if only the diagonals or the diagonals and the boundaries are prescribed? Which subset of control points parameterizes the corresponding affine space?

We solve the following problems:	
\begin{itemize}	\item[a)] Find the conditions, in terms of their control points, that two Bézier curves must fulfill in order to be the diagonal curves of an indeterminate Bézier surface. Then, find the conditions, again in terms of these diagonal curves control points, for them to be the diagonal curves of a Bézier surface with a fixed border.
	\item[b)] Once this goal has been achieved, find the dimension of the space of all Bézier surfaces that have certain boundary and diagonal curves.  This dimension indicates nothing but the number of control points of the Bézier surface that are free once both the boundary and the diagonals have been prescribed.
\end{itemize}

Finally we have also considered the above problems for both the two prescribed diagonals and the prescribed $\mathcal{C}^1$-boundary conditions.

Let us remark that free control points of a Bézier surface with a prescribed boundary and diagonals could be determined by minimizing a functional, such as the Dirichlet functional or any similar one.

\section{Admissible pairs of main diagonal curves}

Let us consider a tensor product B\'ezier surface of degree $n\times n$, ${\bf x}:[0,1]\times[0,1]\to {\mathbb R}^3$, with control net ${\mathcal P } = \{P_{i,j}\}_{i,j=0}^n$. The pair of main diagonal curves of this surface are the B\'ezier curves of degree $2n$:
$$\begin{array}{rcl}
	\alpha_1(t) &=& {\bf x}(t,t),\qquad\quad t\in[0,1],\\[2mm]
	\alpha_2(t) &=& {\bf x}(t,1-t),\ \quad t\in[0,1].
\end{array}
$$

Initially, we are  interested in the following problem: What are the conditions that a pair of degree $2n$ B\'ezier curves must meet in order to be the main diagonals of a tensor product B\'ezier surface?

A first condition can be obtained easily on the control polygons, as illustrated by Figure \ref{fig1}, since the midpoints have to coincide
\begin{equation}\label{unmig}\alpha_1\left(\frac12\right) = \alpha_2\left(\frac12\right).
\end{equation}
\begin{figure}[h]\label{punt-central-diagonals}
	\begin{center}
		\includegraphics[width=6cm]{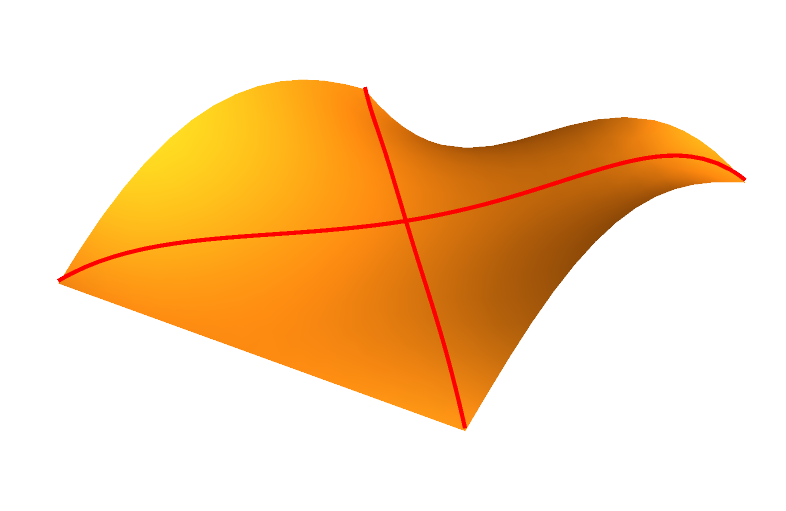}
		\caption{The two main diagonals of a tensor product B\'ezier surface. Notice that both have the same midpoint.}
		\label{fig1}
	\end{center}
\end{figure}
This simple fact has been pointed out in several references, see for instance \cite{Fa}, page 183, or \cite{zhu}. The novelty here is that such a condition will be completely expressed as a condition on the control points of both B\'ezier curves.

\begin{lemma}\label{lemma-1}
	The midpoint of two B\'ezier curves of degree $2n$ with control polygons
	$$
	{\mathcal Q} = \{Q_i\}_{i=0}^{2n},\qquad
	{\mathcal R} = \{R_i\}_{i=0}^{2n},
	$$
	is the same if and only if
	\begin{equation}\label{suma-condicions}
		\sum_{i=0}^{2n} \binom{2n}{i}Q_i = \sum_{i=0}^{2n} \binom{2n}{i} R_i.
	\end{equation}
\end{lemma}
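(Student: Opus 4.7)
The plan is to unfold the definition of a B\'ezier curve at $t = 1/2$ and observe that the condition follows by a one-line computation.

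First I would write the Bernstein basis polynomials evaluated at $t = 1/2$. Since $B_i^{2n}(t) = \binom{2n}{i} t^i (1-t)^{2n-i}$, at $t = 1/2$ each Bernstein polynomial collapses to $\binom{2n}{i}/2^{2n}$ because $t^i(1-t)^{2n-i} = (1/2)^{2n}$ regardless of $i$. Therefore the midpoint of the first curve is
\[
\alpha_{\mathcal Q}(1/2) \;=\; \frac{1}{2^{2n}} \sum_{i=0}^{2n} \binom{2n}{i} Q_i,
\]
and analogously for the curve associated with $\mathcal R$.

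From here, equating $\alpha_{\mathcal Q}(1/2) = \alpha_{\mathcal R}(1/2)$ and multiplying both sides by $2^{2n}$ yields the claimed identity $\sum_{i=0}^{2n} \binom{2n}{i} Q_i = \sum_{i=0}^{2n}\binom{2n}{i} R_i$. The converse is obtained by running the same chain of equalities backwards: dividing by $2^{2n}$ recovers the equality of the midpoints.

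There is essentially no obstacle here; the lemma is a direct consequence of the fact that at the parameter value $t = 1/2$ the Bernstein weights depend only on $\binom{2n}{i}$ and the common factor $1/2^{2n}$. The value of this lemma lies not in the difficulty of its proof but in rephrasing the geometric constraint \eqref{unmig} as a linear condition on the control points, which is the form that will be exploited in the subsequent characterization of admissible pairs of diagonal curves.
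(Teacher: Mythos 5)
Your proof is correct and is essentially identical to the paper's own argument: both evaluate the Bernstein basis at $t=1/2$, observe that $B^{2n}_i(1/2)=\binom{2n}{i}/2^{2n}$, and equate the two midpoints. Nothing further is needed.
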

\begin{proof} Since $$B^{2n}_i\left(\frac12\right) = \binom{2n}{i} \left(\frac12\right)^i \left(1-\frac12\right)^{2n-i}= \binom{2n}{i} \frac1{2^{2n}},$$
	then
	$$\begin{array}{rcl}
		\alpha_1\left(\frac12\right) &=& \frac1{2^{2n}}\sum_{i=0}^{2n} \binom{2n}{i}Q_i,\\[3mm]
		\alpha_2\left(\frac12\right) &=& \frac1{2^{2n}}\sum_{i=0}^{2n} \binom{2n}{i}R_i,
	\end{array}
	$$
	and the result is obtained.
\end{proof}

In paper \cite{HF} the authors state that the control points of a diagonal curve verify a particular dependence on the points of the Bézier surface control net.

\begin{lemma}\label{lemma-2} (See \cite{HF}) If $\{Q_i\}_{i=0}^{2n}$ is the control polygon of the first diagonal curve of a B\'ezier surface with control net $\{P_{i,j}\}_{i,j=0}^n$, the control points, $Q_k$, of the diagonal curve in terms of the control net are
	$$Q_k=\frac{1}{\binom{2n}{k}}\sum_{i=0}^k\binom{n}{i}\binom{n}{k-i} P_{i,k-i}\quad \text{for} \quad  k=0,\dots,2n.$$
\end{lemma}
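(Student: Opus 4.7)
The plan is to start from the standard tensor product expression for the surface and then collapse the double sum into a single Bernstein expansion of degree $2n$, so that the coefficients can be read off directly.

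First I would write
\[
\alpha_1(t) = \mathbf{x}(t,t) = \sum_{i=0}^{n}\sum_{j=0}^{n} B_i^n(t)\, B_j^n(t)\, P_{i,j},
\]
and then invoke the well-known product identity for Bernstein polynomials,
\[
B_i^n(t)\, B_j^n(t) = \frac{\binom{n}{i}\binom{n}{j}}{\binom{2n}{i+j}}\, B_{i+j}^{2n}(t),
\]
which follows immediately by multiplying out $\binom{n}{i}t^i(1-t)^{n-i} \cdot \binom{n}{j}t^j(1-t)^{n-j}$ and matching with $B_{i+j}^{2n}(t) = \binom{2n}{i+j}t^{i+j}(1-t)^{2n-i-j}$. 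This is the only nontrivial ingredient; everything else is bookkeeping.

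Next I would change the summation index by setting $k = i+j$, so that for each fixed $k \in \{0, 1, \dots, 2n\}$ the inner sum runs over $i \in \{\max(0,k-n),\dots,\min(k,n)\}$; the terms outside this range vanish since $\binom{n}{i}$ or $\binom{n}{k-i}$ is zero, so one may harmlessly extend the range to $i=0,\dots,k$. This gives
\[
\alpha_1(t) = \sum_{k=0}^{2n} \frac{1}{\binom{2n}{k}} \left(\sum_{i=0}^{k} \binom{n}{i}\binom{n}{k-i}\, P_{i,k-i}\right) B_k^{2n}(t).
\]

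Finally, since $\{B_k^{2n}\}_{k=0}^{2n}$ is a basis of the space of polynomials of degree at most $2n$, the coefficients in any such expansion are unique. Comparing with $\alpha_1(t) = \sum_{k=0}^{2n} Q_k\, B_k^{2n}(t)$ yields the claimed formula for $Q_k$. The main (and only real) obstacle is citing or quickly verifying the Bernstein product identity; the rest is a straightforward index reorganization.
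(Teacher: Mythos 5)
Your proof is correct. The key identity $B_i^n(t)\,B_j^n(t) = \frac{\binom{n}{i}\binom{n}{j}}{\binom{2n}{i+j}}\,B_{i+j}^{2n}(t)$ is verified exactly as you say, the reindexing over $k=i+j$ with the harmless extension of the inner range is sound, and linear independence of $\{B_k^{2n}\}_{k=0}^{2n}$ justifies reading off the coefficients. Note, however, that the paper does not prove this lemma at all: it is stated with a citation to Holliday--Farin, and the surrounding text only offers a geometric interpretation, namely that the points $P_{i,j}$ with $i+j=k$ form a degree-$k$ B\'ezier control polygon which, after $2n-k$ degree elevations, has $Q_k$ as its middle control point. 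That degree-elevation description is really the same binomial convolution in disguise, so your direct algebraic computation is not so much a different theorem as a self-contained verification of what the paper takes on faith; it has the advantage of immediately yielding the expansion $\mathbf{x}(t,t)=\sum_{k=0}^{2n}\frac{B_k^{2n}(t)}{\binom{2n}{k}}\sum_{i=0}^{k}\binom{n}{i}\binom{n}{k-i}P_{i,k-i}$ in the exact form the paper later uses in the proof of Lemma \ref{lemma-3}.
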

The fact is that $Q_k$ depends only on the control points $P_{i,j}$ such that $i+j=k$. Let us consider only those control points, $Q_k$, with $k\leq n$. The others follow by symmetry. Treat the $P_{i,j}$ with $i+j=k$ as a degree $k$ Bézier curve and degree-elevate this curve $2n-k$ times, that is, until it is of degree $2n$. Then, the middle control point on this degree-elevated curve equals $Q_k$, (see \cite{HF} and Figure \ref{farinfig}).
\begin{figure}[h!]
	\centering
	\includegraphics[width =4.5cm]{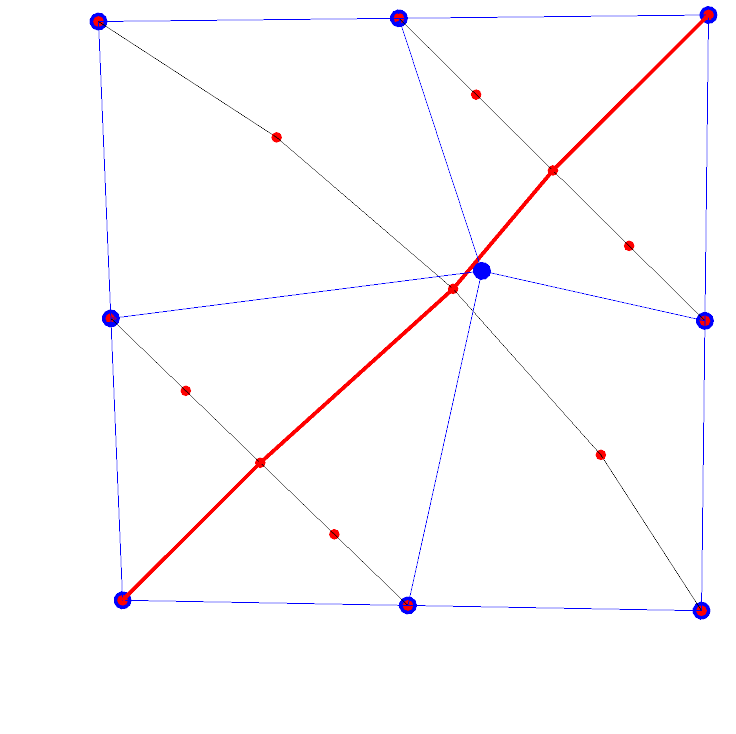}\qquad\qquad\includegraphics[width =6
	cm]{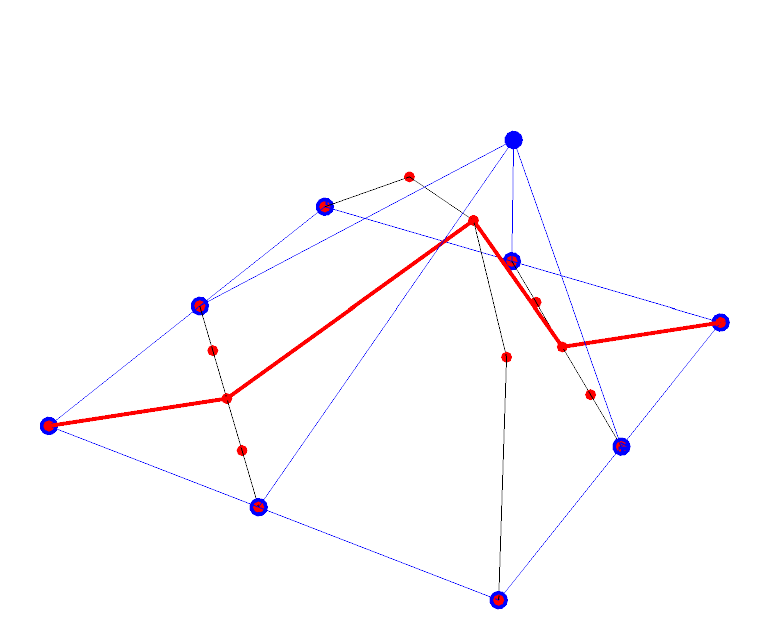}
	\caption{Control polygon (red) of the diagonal curve of a biquadratic Bézier surface obtained by repeated degree elevation. The corresponding degree elevation control points are also in red.}\label{farinfig}
\end{figure}

As we said, we are interested in finding the conditions that two Bézier curves must meet in order to be the main diagonals of some Bézier surface. Obviously, these curves must have the same midpoint, but, as we will see later in Theorem \ref{two-necessary-conditions}, condition (\ref{suma-condicions}) is necessary but not sufficient for two curves to be the diagonal curves of a Bézier surface. In fact, this condition splits in two more conditions.

As we will show, this splitting is a consequence of the fact that any diagonal control point, $Q_k$, depends only on the control points in a diagonal line of the control net, $P_{i,j}$ with $i+j=k$.  Thus, condition (\ref{suma-condicions}) splits into two sufficient conditions according to the parity of the indexes of the control points. We will state this in Theorem \ref{two-necessary-conditions}, but beforehand we will need the following auxiliary Lemma.

\begin{lemma}\label{lemma-3}
	Let us suppose that $\{Q_i\}_{i=0}^{2n}$ is the control polygon of the first diagonal curve, ${\bf x}(t,t)$, and that $\{R_i\}_{i=0}^{2n}$ of the second diagonal curve, ${\bf x}(t,1-t)$, of a B\'ezier surface with control net $\{P_{i,j}\}_{i,j=0}^n$, then  $\{(-1)^iQ_i\}_{i=0}^{2n}$ and $\{(-1)^{n-i}R_i\}_{i=0}^{2n}$ are the control points of the diagonal curves of a B\'ezier surface with control net $\{(-1)^{i+j}P_{i,j}\}_{i,j=0}^n$.
\end{lemma}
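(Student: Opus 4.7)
My plan is to express both diagonal control polygons in closed form as linear combinations of the net $\{P_{i,j}\}$, and then observe that the substitution $P_{i,j}\mapsto(-1)^{i+j}P_{i,j}$ merely pulls a common sign out of each of those sums, because the indices $(i,j)$ that contribute to a given diagonal control point all share the same parity of $i+j$.

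For the main diagonal $\alpha_1(t)={\bf x}(t,t)$, I can invoke Lemma~\ref{lemma-2} directly:
$$
Q_k = \frac{1}{\binom{2n}{k}}\sum_{i} \binom{n}{i}\binom{n}{k-i}\, P_{i,k-i}.
$$
Every control point appearing in this sum satisfies $i+(k-i)=k$, so replacing $P_{i,j}$ by $(-1)^{i+j}P_{i,j}$ multiplies every term by $(-1)^k$. The new $k$-th diagonal control point is therefore $(-1)^kQ_k$, which is exactly the sequence $\{(-1)^iQ_i\}$ claimed in the statement.

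For the anti-diagonal $\alpha_2(t)={\bf x}(t,1-t)$ Lemma~\ref{lemma-2} does not apply directly, so I first need the analogous formula. Expanding
$$
B^n_i(t)\,B^n_j(1-t) = \binom{n}{i}\binom{n}{j}\, t^{\,i+n-j}(1-t)^{\,n-i+j},
$$
collecting terms of degree $k=i+n-j$, and rewriting each monomial as $B^{2n}_k(t)/\binom{2n}{k}$ gives
$$
R_k = \frac{1}{\binom{2n}{k}}\sum_{i} \binom{n}{i}\binom{n}{n-k+i}\, P_{i,\,n-k+i}.
$$
(One can also derive this by applying Lemma~\ref{lemma-2} to the reparameterized patch $\tilde{\bf x}(u,v)={\bf x}(u,1-v)$, whose control net is $\{P_{i,n-j}\}$.) The control points entering this sum have indices with $i+(n-k+i)=n-k+2i$, which has the same parity as $n-k$. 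Hence the substitution $P_{i,j}\mapsto(-1)^{i+j}P_{i,j}$ scales each term by the common factor $(-1)^{n-k}$, giving a new control point $(-1)^{n-k}R_k$, i.e.\ the sequence $\{(-1)^{n-i}R_i\}$ claimed in the statement.

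The only genuinely non-routine step is setting up the closed formula for $R_k$, since the literature formula quoted in Lemma~\ref{lemma-2} covers only the main diagonal; once both formulas are written down, the result reduces to the trivial observation that $(-1)^{i+j}$ depends only on $i+j\bmod 2$ and that this parity is constant along each (anti-)diagonal of the index grid.
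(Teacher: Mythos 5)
Your proof is correct and follows essentially the same route as the paper: the paper likewise multiplies the expansion of ${\bf x}(t,t)$ in terms of $\sum_i\binom{n}{i}\binom{n}{k-i}P_{i,k-i}$ by $(-1)^k$ and notes that $(-1)^k=(-1)^{i+(k-i)}$ distributes onto the net, declaring the anti-diagonal case ``analogous.'' Your only addition is to write out the closed formula for $R_k$ explicitly (which the paper also records, just after the lemma, as $E^n_k$, noting $\binom{n}{n-k+i}=\binom{n}{k-i}$) and to carry out the parity count $i+(n-k+i)\equiv n-k \pmod 2$ that the paper leaves implicit.
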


\begin{proof}
	As we said before, if $\{Q_i\}_{i=0}^{2n}$ is the control polygon of the first diagonal curve of a B\'ezier surface with control net $\{P_{i,j}\}_{i,j=0}^n$ then
	$$x(t,t)=\sum_{k=0}^{2n}\frac{B^{2n}_k(t)}{\binom{2n}{k}} \left(\sum_{i=0}^k\binom{n}{i}\binom{n}{k-i} P_{i,k-i} \right) = \sum_{k=0}^{2n}B^{2n}_k(t) Q_k,$$
	and if we multiply this equation by $(-1)^k$ then
	$$\sum_{k=0}^{2n}\frac{B^{2n}_k(t)}{\binom{2n}{k} } \left(\sum_{i=0}^k\binom{n}{i}\binom{n}{k-i} (-1)^{i+k-i}P_{i,k-i} \right) = \sum_{k=0}^{2n}B^{2n}_k(t) (-1)^kQ_k,$$
	and the first statement is proved.
	\bigskip
	The procedure is analogous for the second statement.
	
\end{proof}

Let us recall again the relation between the control net and the control points of the diagonal curves:
$$\aligned x(t,t)&=\sum_{k=0}^{2n}\frac{B^{2n}_k(t)}{\binom{2n}k}  \left(\sum_{i=0}^k\binom{n}{i}\binom{n}{k-i} P_{i,k-i} \right) = \sum_{k=0}^{2n}B^{2n}_k(t) Q_k,\\
x(t,1-t)&=\sum_{k=0}^{2n}\frac{B^{2n}_k(t)}{\binom{2n}k}  \left(\sum_{i=0}^k\binom{n}{i}\binom{n}{k-i} P_{i,n-k+i} \right) = \sum_{k=0}^{2n}B^{2n}_k(t) R_k\endaligned$$
and then denote
\begin{equation}\label{DE}
	\left\{
	\begin{array}{rcl}
		D^n_k({\mathcal P } ) &:=& \sum_{i=0}^{k} \binom{n}{i} \binom{n}{k-i} P_{i,k-i},\\
		E^n_k({\mathcal P } ) &:=& \sum_{i=0}^{k} \binom{n}{i} \binom{n}{k-i} P_{i,n-k+i},
	\end{array}
	\right.
\end{equation}
where \begin{equation}\label{QR_P}Q_k=\frac{D^n_k({\mathcal P } )}{\binom{2n}{k}}\quad\quad
R_k=\frac{E^n_k({\mathcal P } )}{\binom{2n}{k}}.\end{equation}

Now we will give the necessary and sufficient conditions for two control polygons to be the control points of the two diagonal curves.
\begin{theorem}\label{two-necessary-conditions}
	Two B\'ezier curves of degree $2n$ with control polygons
	$$
	{\mathcal Q} = \{Q_i\}_{i=0}^{2n},\qquad
	{\mathcal R} = \{R_i\}_{i=0}^{2n},
	$$
	are the main diagonals of a tensor product B\'ezier surface of degree $n\times n$ if and only if, for even $n$
	\begin{equation}\label{condicions-even}
		\begin{array}{rcl}
			\sum_{i=0}^{n} \binom{2n}{2i}Q_{2i}&=&\sum_{i=0}^{n} \binom{2n}{2i}R_{2i},\\[3mm]
			\sum_{i=0}^{n-1} \binom{2n}{2i+1}Q_{2i+1} &=&\sum_{i=0}^{n-1} \binom{2n}{2i+1}R_{2i+1},
		\end{array}
	\end{equation}
	or, for odd $n$
	\begin{equation}\label{condicions-odd}
		\begin{array}{rcl}
			\sum_{i=0}^{n} \binom{2n}{2i}Q_{2i}&=&\sum_{i=0}^{n-1} \binom{2n}{2i+1}R_{2i+1},\\[3mm]
			\sum_{i=0}^{n-1} \binom{2n}{2i+1}Q_{2i+1} &=&\sum_{i=0}^{n} \binom{2n}{2i}R_{2i}.
		\end{array}
	\end{equation}
\end{theorem}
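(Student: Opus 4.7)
The plan is to establish necessity through Lemmas \ref{lemma-1} and \ref{lemma-3}, and to obtain sufficiency either by an explicit construction or by a dimension count.

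For the \emph{only if} direction, Lemma \ref{lemma-1} applied to $(\alpha_1,\alpha_2)$ produces the single identity (\ref{suma-condicions}), which is not enough by itself. To obtain a second, independent relation I will invoke Lemma \ref{lemma-3}: the auxiliary surface with control net $\{(-1)^{i+j}P_{i,j}\}$ has diagonal curves with control polygons $\{(-1)^{k}Q_k\}$ and $\{(-1)^{n-k}R_k\}$, and applying Lemma \ref{lemma-1} to it yields
\[
\sum_{k=0}^{2n}(-1)^k\binom{2n}{k}Q_k \;=\; (-1)^n\sum_{k=0}^{2n}(-1)^k\binom{2n}{k}R_k.
\]
Adding and subtracting this with (\ref{suma-condicions}) isolates the even-index and odd-index partial sums. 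When $n$ is even the factor $(-1)^n=+1$ keeps both identities symmetric in $Q$ and $R$, producing (\ref{condicions-even}); when $n$ is odd the factor $(-1)^n=-1$ flips one side, pairing the even-index sum of the $Q_k$ with the odd-index sum of the $R_k$ as in (\ref{condicions-odd}).

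For the \emph{if} direction, the key observation---already noted after Lemma \ref{lemma-2}---is that $Q_k$ depends only on the entries of $\mathcal{P}$ along the diagonal $i+j=k$ while $R_k$ depends only on those along the antidiagonal $j-i=n-k$. Consequently the linear map $\Phi:\mathcal{P}\mapsto(\{Q_k\},\{R_k\})$ splits into two independent blocks according to the parity of $i+j$, and each block carries exactly one of the identities in (\ref{condicions-even}) or (\ref{condicions-odd}) as its sole compatibility condition. My strategy is to build a preimage explicitly by fixing the four corners $P_{0,0}$, $P_{n,n}$, $P_{0,n}$, $P_{n,0}$ from $Q_0,Q_{2n},R_0,R_{2n}$ and then solving outward: at each stage an equation for $D^n_k$ or $E^n_k$ is a single linear relation on a line of unknowns with strictly positive coefficients $\binom{n}{i}\binom{n}{k-i}$, so there is always at least one free parameter until the last diagonal of the given parity class, where the residual equation collapses to precisely one of the stated identities and is satisfied by hypothesis.

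The step I expect to be most delicate is verifying that these are the \emph{only} obstructions arising in the diagonal-by-diagonal construction---that no hidden compatibility relation appears along the way. A cleaner route is a dimension count: the image of $\Phi$ lies in a subspace of codimension $2$, so it suffices to show $\dim\ker\Phi=(n-1)^2$, whence $\dim\operatorname{Image}\Phi=(n+1)^2-(n-1)^2=4n$ and equality with the constrained subspace follows. This kernel count matches the expected number of free interior control points, and I would verify it by decomposing the kernel along the same parity split and reducing to a linear algebra check on the two independent subsystems described above.
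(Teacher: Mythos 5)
Your necessity argument coincides with the paper's: both apply Lemma \ref{lemma-1} to the given surface and then to the sign-flipped net $\{(-1)^{i+j}P_{i,j}\}$ via Lemma \ref{lemma-3}, and the factor $(-1)^n$ in your displayed identity is exactly what separates the even and odd cases. That half is complete and correct.

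The gap is in sufficiency. You correctly observe that the map $\Phi:{\mathcal P}\mapsto(\{Q_k\},\{R_k\})$ splits by the parity of $i+j$, and you assert that each block ``carries exactly one of the identities as its sole compatibility condition'' --- but that assertion \emph{is} the content to be proved, and neither of your two routes establishes it. The diagonal-by-diagonal construction runs into precisely the difficulty you flag: every $P_{r,s}$ occurs in one $D$-equation ($k=r+s$) \emph{and} one $E$-equation ($k=n+r-s$), so the equations cannot be solved one line of the net at a time without tracking how the choices made for the $D$'s constrain the $E$'s; ``at least one free parameter at each stage'' does not rule out an obstruction accumulating across the two families. The dimension count is logically equivalent to the statement to be proved ($\dim\ker\Phi=(n-1)^2$ if and only if $\operatorname{rank}\Phi=4n$), and you leave its verification as an unexecuted ``linear algebra check.'' The paper closes this gap with a short cokernel computation that your proposal is missing: take an arbitrary vanishing combination $\sum_k\lambda_k D^n_k({\mathcal P})+\sum_k\mu_k E^n_k({\mathcal P})=0$; since $P_{r,s}$ appears in $D^n_{r+s}$ and in $E^n_{n+r-s}$ with the \emph{same} coefficient $\binom{n}{r}\binom{n}{s}$, the coefficient of $P_{r,s}$ in the combination is $\binom{n}{r}\binom{n}{s}\bigl(\lambda_{r+s}+\mu_{n+r-s}\bigr)$, forcing $\lambda_{r+s}=-\mu_{n+r-s}$ for all $(r,s)$; varying $(r,s)$ with $r+s$ fixed and then with $n+r-s$ fixed propagates these equalities so that all $\lambda_k$ of a given parity coincide (and are opposite to the corresponding $\mu$'s). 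Hence the space of linear dependencies among the rows of system (\ref{the-system}) is exactly two-dimensional, and compatibility holds precisely under (\ref{condicions-even}) or (\ref{condicions-odd}). Supplying this computation is what your proof still needs.
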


\begin{proof}
	Let us suppose that $n$ is even. The proof is analogous if $ n$ is odd.
	
	First we will prove that if the curves are the main diagonals of a tensor product B\'ezier surface with control net $\{P_{i,j}\}_{i,j=0}^{n}$ Eq. (\ref{condicions-even}) is fulfilled. By Lemma \ref{lemma-1} we have 
	$$\sum_{i=0}^{2n} \binom{2n}{i}Q_i = \sum_{i=0}^{2n} \binom{2n}{i}R_i,$$
	which can be rewritten as
	\begin{equation}\label{even-1}\sum_{i=0}^{n} \binom{2n}{2i}Q_{2i}+\sum_{i=0}^{n-1} \binom{2n}{2i+1}Q_{2i+1} =\sum_{i=0}^{n} \binom{2n}{2i}R_{2i}+\sum_{i=0}^{n-1} \binom{2n}{2i+1}R_{2i+1}.
	\end{equation}
	
	Let us consider the Bézier surface with control net $\{(-1)^{i+j}P_{i,j}\}_{i,j=0}^{n}$. By Lemma \ref{lemma-3} the control polygons of the main diagonals are $\{(-1)^iQ_i\}_{i=0}^{2n}$ and $\{(-1)^{n-i}R_i\}_{i=0}^{2n}$, then if we apply Lemma \ref{lemma-1} again we have 
	\begin{equation}\label{even-2}\sum_{i=0}^{n} \binom{2n}{2i}Q_{2i}-\sum_{i=0}^{n-1} \binom{2n}{2i+1}Q_{2i+1} =\sum_{i=0}^{n} \binom{2n}{2i}R_{2i}-\sum_{i=0}^{n-1} \binom{2n}{2i+1}R_{2i+1}.
	\end{equation}
	Adding and substracting the equations (\ref{even-1}) and (\ref{even-2}), we get Eq.\ref{condicions-even}.
	
	In order to prove the converse we need to check that if Eq. (\ref{condicions-even}) is fulfilled then the linear system
	\begin{equation}\label{the-system}
		\left\{
		\begin{array}{rcl}
			D^n_k({\mathcal P } )&=&  \binom{2n}{k} Q_k, \qquad k=0,\dots, 2n\\[2mm]
			E^n_k({\mathcal P } ) &=&  \binom{2n}{k} R_k, \qquad k=0,\dots, 2n
		\end{array},
		\right.
	\end{equation}
	with the surface control points as unknown variables, has a solution. The conditions in the statement are what we need for the linear system, Eq. (\ref{the-system}), to be compatible.
	It is easy to check that
	\begin{equation}\label{condicions-even-DE}\left\{
		\begin{array}{lcr}
			\sum_{i=0}^{n}	D^n_{2i}({\mathcal P } ) - \sum_{i=0}^{n}	E^n_{2i}({\mathcal P } )&=&0\\[2mm]
			\sum_{i=0}^{n-1}	D^n_{2i+1}({\mathcal P } ) - \sum_{i=0}^{n-1}	E^n_{2i+1}({\mathcal P } )&=&0
		\end{array}
		\right..
	\end{equation}
	
	We will show, in addition, that these are the only  two vanishing linear combinations of the left hand side of Eq. (\ref{the-system}). Once this is proved then conditions in Eq. (\ref{condicions-even}), that are same linear combination but of the right hand side terms of system (\ref{the-system}), will imply the compatibility of the system.

	So, let us consider an arbitrary vanishing linear combination of the left hand side in system (\ref{the-system})
	\begin{equation}\label{prs}\sum_{i=0}^{n} \lambda_{i} D^n_{i}({\mathcal P })+\sum_{i=0}^{n} \mu_{i} E^n_{i}({\mathcal P } )=0.
	\end{equation}
	
	Any control point $P_{r,s}$ appears only once in $D^n_{r+s}({\mathcal P })$ and in $E^n_{n+r-s}({\mathcal P })$ and with the same coefficient $\binom{n}{r}\binom{n}{s}$, see Eq. (\ref{DE}). Then, the coefficient of $P_{r,s}$ in Eq. (\ref{prs}) is
	$$  \binom{n}{r}\binom{n}{s}(\lambda_{r+s} + \mu_{n+r-s}) ,
	$$
	Then, since the linear combination in Eq. (\ref{prs}) must vanish for all $P_{r,s}$ in the control net,  $\lambda_{r+s}+ \mu_{n+r-s}=0$, the linear combination coefficients are  opposites, as in Eq. (\ref{condicions-even-DE}).
	This happens for all sums $r+s$, so we have $\lambda_{r+s}=\lambda_{r-i+s+i}=-\mu_{n+r-s+2i}$, in other words, all $\lambda_{r+s}$ coefficients with the same parity of subindex are the opposite of all $\mu_{r+s}$ coefficients with the same parity. Thus, for all even (odd) subindices $\lambda_{r+s}=\lambda_{even \,(odd)}=-\mu_{r+s}$. Thus
	$$\begin{array}{rcl}
		\lambda_{even}\,(\sum_{i=0}^{n}	D^n_{2i}({\mathcal P } ) - \sum_{i=0}^{n}	E^n_{2i}({\mathcal P } ))+
		\lambda_{odd}\,(\sum_{i=0}^{n-1}	D^n_{2i+1}({\mathcal P } ) - \sum_{i=0}^{n-1}	 E^n_{2i+1}({\mathcal P } ))=0.
	\end{array}
	$$
	
	Therefore, this linear combination of equations in system (\ref{the-system}) is a linear combination of the equations in  (\ref{condicions-even-DE}). In other words, as mentioned before, Eq. (\ref{condicions-even-DE}) contains the two only vanishing linear combinations of the left hand side of Eq. (\ref{the-system}).

	Since the right hand side verifies the same relation, Eq. (\ref{condicions-even}), we can be sure that system (\ref{the-system}) has a solution, so a Bézier control net that fits the given diagonals control polygons does exist.

\end{proof}

Although we are mainly interested in dealing with both main diagonals simultaneously, it would be possible to deal with the case where only one diagonal is prescribed. Notice that although we will not consider it here, this case is simpler since the fundamental midpoint restriction, in Eq. (\ref{unmig}), does not apply.

\begin{remark}
	The proof of the splitting of Eq. (\ref{suma-condicions}) has been obtained thanks to auxiliary Lemma \ref{lemma-2}, but there is an alternative way to prove it using rational B\'ezier surfaces. As said in \cite{Fa},(page 183), the two diagonal curves of a rational B\'ezier surface, ${\bf x}(u,v)$, not only pass through point ${\bf x}(\frac12,\frac12)$, but also meet one more time: $\alpha_1(\infty) = \alpha_2(\infty)$.  This second condition can be written in terms of the (projective) control points of the two diagonals as in Eq. (\ref{suma-condicions}) but after a change of sign of the odd terms. When the rational B\'ezier surface reduces to a B\'ezier surface, in other words, when all the weights are the same, we obtain Eq. (\ref{even-2}).
\end{remark}

The relation between the control points of the diagonal curves, given in Theorem \ref{two-necessary-conditions}, makes it possible in the odd case to solve both central control points, $Q_n$ and $R_n$, in terms of the other control points. However, it is not possible to do the same in the even case since $Q_n$ and $R_n$ appear in the same equation. We illustrate this relation in the following corollary.

\begin{corollary}
	\label{corollary-two-necessary-conditions}
	${\mathcal Q} = \{Q_i\}_{i=0}^{2n}$ and ${\mathcal R} = \{R_i\}_{i=0}^{2n}$ are the control polygons of the main diagonals of a Bézier surface if and only if for even $n$
	\begin{equation}\label{condicions-even-solved}
		\begin{array}{rcl}
			Q_n &=&\frac1{\binom{2n}{n}}\left(\sum_{i=0}^{n} \binom{2n}{2i}R_{2i}-\sum_{i=0,2i\ne n}^{n} \binom{2n}{2i}Q_{2i}\right),\\[3mm]
			R_{n+1}&=&\frac1{\binom{2n}{n}}\left(\sum_{i=0}^{n-1} \binom{2n}{2i+1}Q_{2i+1} -\sum_{i=0,2i\ne n}^{n-1} \binom{2n}{2i+1}R_{2i+1}\right),
		\end{array}
	\end{equation}
	and,  for odd $n$
	\begin{equation}\label{condicions-odd-solved}
		\begin{array}{rcl}
			Q_n &=&\frac1{\binom{2n}{n}}\left( \sum_{i=0}^{n-1} \binom{2n}{2i+1}R_{2i+1}-\sum_{i=0, 2i\ne n}^{n} \binom{2n}{2i}Q_{2i}\right),\\[3mm]
			R_n&=&\frac1{\binom{2n}{n}}\left(\sum_{i=0}^{n-1} \binom{2n}{2i+1}Q_{2i+1} -\sum_{i=0,2i\ne n}^{n} \binom{2n}{2i}R_{2i}\right).
		\end{array}
	\end{equation}
\end{corollary}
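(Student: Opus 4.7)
The plan is to derive each formula by solving one of the two linear relations in Theorem \ref{two-necessary-conditions} for the ``central'' control point of the corresponding sum. The corollary is essentially a rewriting of the theorem in explicit solved form, so the ``if and only if'' follows automatically from the reversibility of the isolation step: each formula is equivalent, as a linear equation among the $Q_i$'s and $R_j$'s, to exactly one of the relations in (\ref{condicions-even}) or (\ref{condicions-odd}).

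For the even $n$ case, since $n$ itself is even, the term $Q_n$ appears on the left of the first (even-indexed) relation of (\ref{condicions-even}) with coefficient $\binom{2n}{n}$. I would move every other even-indexed $Q_{2i}$ (for $2i \neq n$) to the right-hand side, rewrite the remaining contribution by the right-hand side of that same relation, and finally divide through by $\binom{2n}{n}$. The output is the stated expression for $Q_n$. An entirely parallel manipulation on the second (odd-indexed) relation in (\ref{condicions-even}) isolates the central odd-indexed term of the second diagonal.

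For the odd $n$ case, the index $n$ is now odd, so $Q_n$ lives in the \emph{second} relation of (\ref{condicions-odd}) with coefficient $\binom{2n}{n}$. Solving for it exactly as above gives the first line of (\ref{condicions-odd-solved}), and the $Q\leftrightarrow R$ symmetry of the two equations in (\ref{condicions-odd}) then produces the analogous formula for $R_n$, obtained by isolating $R_n$ on the right-hand side of the first relation.

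The only delicate point is the parity bookkeeping: one must verify in each case that the term being isolated really has coefficient $\binom{2n}{n}$ in the chosen relation, and that the exclusion ``$2i \neq n$'' in the residual sums removes exactly that term and nothing else. This is where slips are most likely, but it is a matter of careful indexing rather than a substantive obstacle --- the whole statement is a direct algebraic consequence of Theorem \ref{two-necessary-conditions}.
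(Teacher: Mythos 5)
Your plan --- isolate the central control point from whichever of the two relations in Theorem \ref{two-necessary-conditions} contains it, using that its coefficient there is a central binomial --- is exactly the intended derivation; the paper gives no separate proof of the corollary and presents it as precisely this rearrangement. The problem is that you explicitly defer the ``parity bookkeeping'' and assert without checking that the isolation step ``gives the first line of (\ref{condicions-odd-solved})''. That deferred check is where the entire content of the proof lives, and if you carry it out you will find that the printed formulas are \emph{not} what the isolation step produces.

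Concretely: in the even case, $R_{n+1}$ sits in the odd-indexed relation of (\ref{condicions-even}) with coefficient $\binom{2n}{n+1}$, not $\binom{2n}{n}$, so the denominator in the second line of (\ref{condicions-even-solved}) does not come out of your computation; the paper's own $n=2$ example, $R_3=Q_1+Q_3-R_1$, corresponds to dividing by $\binom{4}{3}=4$ rather than $\binom{4}{2}=6$. In the odd case, $Q_n$ is an odd-indexed term, so it must be isolated from $\sum_{i}\binom{2n}{2i+1}Q_{2i+1}$ in the second relation of (\ref{condicions-odd}), which yields
$$Q_n=\frac{1}{\binom{2n}{n}}\Bigl(\sum_{i=0}^{n}\tbinom{2n}{2i}R_{2i}-\sum_{i=0,\,2i+1\ne n}^{n-1}\tbinom{2n}{2i+1}Q_{2i+1}\Bigr),$$
whereas the printed first line of (\ref{condicions-odd-solved}) has the parities of the two residual sums interchanged and an exclusion $2i\ne n$ that is vacuous when $n$ is odd; taken literally, its right-hand side vanishes by the first relation of (\ref{condicions-odd}). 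The $n=3$ example in the paper again matches the corrected form, not the printed one. So your approach is the right one and is the same as the paper's, but the step you wave through is false as stated: a complete argument must actually track which relation each central point belongs to and which coefficient it carries, and doing so proves a corrected version of the displayed formulas rather than the ones printed.
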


At this point, let us remark that user could give two curves the  role of diagonal curves which do not intersect. In that case, the substitution of both central control points of the given diagonal curves by those computed with Eq. \ref{condicions-even-solved} in Corollary \ref{corollary-two-necessary-conditions}, for odd degre, would change the given input to a pair of admissible diagonal curves, see Figure \ref{newinput}. For even degree, the corresponding control point subtitutions should be done, Eq. (\ref{condicions-odd-solved}).

\begin{figure}[h!]
	\centering
	\includegraphics[width =6cm]{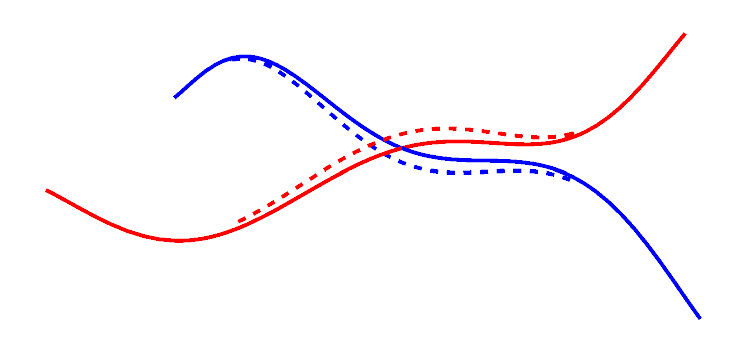}
	\caption{If a designer gave two curves that do not intersect (dashed line), one could modify them  slightly by means of Corollary \ref{corollary-two-necessary-conditions} in order to get two admissible diagonal curves.}\label{newinput}
\end{figure}

For odd degree, the control point which is modified is the central one, so the shape of the new diagonal curves is not too different to the initial one. In the case of even degree, if there was some symmetry in the given diagonals it could be lost in the modified diagonal curves. Nevertheless, a way of solving  this handicap easily would be to carry out a degree elevation of the even degree curves given by the user, and then proceed with the degree-elevated odd case.

An alternative way of adapting the information provided by the user was given in papers \cite{zhu} and \cite{zhu22}, where the authors minimized the distance between the given and new diagonal curves. Now, in the same fashion but considering the necessary conditions in Theorem \ref{two-necessary-conditions} (or Corollary \ref{corollary-two-necessary-conditions}), an analogous minimization procedure could be performed.

\begin{definition}
	Two tensor product B\'ezier surfaces of degree $n\times n$ are said to be diagonals equivalent if they have the same main diagonal curves.
\end{definition}

\begin{proposition}\label{prop1}
	The class of diagonals equivalent surfaces is an affine subspace that can be parameterized by $(n-1)^2$ control points.
\end{proposition}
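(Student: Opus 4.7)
The plan is to recognize the class of diagonals equivalent surfaces as the fiber of a single linear map and apply rank--nullity, extracting the rank from work already carried out inside the proof of Theorem~\ref{two-necessary-conditions}.

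First, fix a pair of admissible diagonal control polygons $\mathcal{Q}$, $\mathcal{R}$ and consider the linear map
\[
\Phi\colon \mathcal{P}\longmapsto \bigl(D^n_k(\mathcal{P}),\,E^n_k(\mathcal{P})\bigr)_{k=0}^{2n}
\]
defined on the space of all control nets. By (\ref{QR_P}), two surfaces are diagonals equivalent if and only if their control nets share the same image under $\Phi$; hence the class of diagonals equivalent surfaces coincides with the fiber of $\Phi$ over the point prescribed by $\mathcal{Q}$ and $\mathcal{R}$. Since $\Phi$ is linear, this fiber is an affine subspace, and by Theorem~\ref{two-necessary-conditions} it is non-empty.

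Second, I compute the dimension. All non-empty fibers of $\Phi$ share the common dimension $(n+1)^2-\mathrm{rank}(\Phi)$ (counted in control points rather than in real scalar coordinates). The crucial input is that the proof of Theorem~\ref{two-necessary-conditions} in fact determines the cokernel of $\Phi$: the $4n+2$ linear forms $\{D^n_k, E^n_k\}$ admit exactly the two independent vanishing linear combinations exhibited in (\ref{condicions-even-DE}), and no others. Consequently $\mathrm{rank}(\Phi)=(4n+2)-2=4n$, and the fiber has dimension $(n+1)^2-4n=(n-1)^2$.

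Finally, to present this affine space as parameterized by $(n-1)^2$ control points, standard linear algebra closes the argument: one extracts from the coefficient matrix of system (\ref{the-system}) a collection of $4n$ columns spanning the column space of $\Phi$; the associated $4n$ control points are then uniquely recovered from the given diagonals together with arbitrary values assigned to the remaining $(n-1)^2$ control points, which thereby serve as parameters. The only delicate step is the cokernel count, and since that has already been executed in the proof of Theorem~\ref{two-necessary-conditions}, no fresh obstacle arises here.
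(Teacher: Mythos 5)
Your proof is correct and follows essentially the same route as the paper: both identify the class with the solution set of the linear system (\ref{the-system}), take the rank to be $4n+2-2=4n$ using the fact (established in the proof of Theorem~\ref{two-necessary-conditions}) that (\ref{condicions-even-DE}) exhausts the vanishing linear combinations, and conclude the dimension is $(n+1)^2-4n=(n-1)^2$. Your closing remark on selecting $(n-1)^2$ control points as actual parameters is a small but welcome addition that the paper leaves implicit.
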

\begin{proof} Let us compute the dimension of the affine subspace of solutions of system (\ref{the-system}). The number of variables, $\{P_{i,j}\}_{i,j=0}^n$, is $(n+1)^2$. The number of equations of the linear system  is $2(2n+1) = 4n+2$, but its rank is two units less, as we have just seen in the proof of Theorem \ref{two-necessary-conditions}. Therefore, the dimension is $(n+1)^2-(4n+2-2) = (n-1)^2$.
\end{proof}

\begin{example}	For $n=2$ the two conditions in Eq. (\ref{condicions-even}) are
	$$\left\{\begin{array}{rcl}
		Q_0+6Q_2+Q_4&=& R_0+6R_2+R_4,\\
		Q_1 +  Q_3 &=& R_1 + R_3.
	\end{array}
	\right.
	$$
	
	Therefore, if we solve it, as stated in Corollary \ref{corollary-two-necessary-conditions},
	$$\aligned &Q_2=\frac{1}{6} \left(R_0+6 R_2+R_4-Q_0-Q_4\right),\\
	&R_3= Q_1+Q_3-R_1,\endaligned
	$$
	then
	$$
	\begin{array}{rcl}
		{\mathcal Q} &=& \{Q_0,Q_1,\frac{1}{6} \left(R_0+6 R_2+R_4-Q_0-Q_4\right), Q_3,Q_4\}\\[3mm]
		{\mathcal R} &=& \{R_0,R_1,R_2, Q_1+Q_3-R_1,R_4\}
	\end{array}
	$$
	are the control polygons of two B\'ezier curves admissible as main diagonals of a degree $2\times2$ B\'ezier surface.	Indeed, any B\'ezier surface with a control net
	$$\left(
	\begin{array}{ccc}
		Q_0 & \bf{P_{0,1}} & R_0 \\
		2 Q_1-\bf{P_{0,1}} & \frac{1}{4} \left(6 R_2-Q_0-Q_4\right) & 2 R_1-{\bf P_{0,1}} \\
		R_4 & {\bf P_{0,1}}+2 Q_3-2 R_1 & Q_4 \\
	\end{array}
	\right)
	$$
	has as its main diagonals the B\'ezier curves with the control polygons ${\mathcal Q}$ and	${\mathcal R}$. Notice that once the diagonal curves are prescribed there is only one degree of freedom, $P_{0,1}$.
\end{example}

\begin{example} 
	For $n=3$ from Corollary \ref{corollary-two-necessary-conditions} we have
	$$\begin{array}{rcl}
		R_3 &=& \frac1{20} \left(Q_0 + 15 Q_2 + 15 Q_4 - 6R_1 - 6R_5\right),\\[3mm]
		Q_3&=& \frac1{20} \left(R_0+ 15R_2+15 R_4+ R_6-6Q_1 - 6Q_5\right),
	\end{array}
	$$
	then $
	{\mathcal Q} = \{Q_i\}_{i=0}^{6}$ and $	{\mathcal R} = \{R_i\}_{i=0}^{6},
	$
	are the control polygons of two B\'ezier curves that are admissible as main diagonals of a degree $3\times3$ B\'ezier surface with a control net
	$$\left(
	\begin{array}{cccc}
		Q_0 & \bf{P_{0,1}} & \bf{P_{0,2}} & R_0 \\
		2 Q_1-\bf{P_{0,1}} & \bf{P_{1,1}} & \bf{P_{1,2}} & 2 R_1-\bf{P_{0,2}} \\
		5 Q_2 	-{\bf P_{0,2}}-3 {\bf P_{1,1}}&P_{2,1}   &P_{2,2} & 5 R_2-{\bf P_{0,1}}-3 {\bf P_{1,2}} \\
		R_6 &P_{3,1}  & P_{3,2}& Q_6 \\
	\end{array}
	\right)$$
where	
$$\aligned& P_{2,1}	=\frac{5 }{3}R_2+\frac{5 }{3}R_4-\frac{2}{3} Q_1-\frac{2 }{3}Q_5-{\bf P_{1,2}}\\
&P_{2,2}=\frac{5 }{3}Q_2+\frac{5 }{3}Q_4-\frac{2 }{3}R_1-\frac{2 }{3}R_5 -{\bf P_{1,1}}\\
&P_{3,1}={\bf P_{0,2}}+3 {\bf P_{1,1}}-5 Q_2+2 R_5\\
&P_{3,2}= {\bf P_{0,1}}+3 {\bf P_{1,2}}+2 Q_5-5 R_2
\endaligned
$$

Notice that there are four degrees of freedom, $P_{0,1},P_{0,2}, P_{1,1}, P_{1,2} $, in other words, the space of all degree $3$ tensor product Bézier surfaces with given diagonals is an affine space of dimension $2^2$. In Figure \ref{fig}, with the same given diagonal curves, we plot three different surfaces that belong to this affine space.
\end{example}

\begin{center}
\begin{figure}[h!]\centering
	\qquad\qquad	\includegraphics[width =4.2cm]{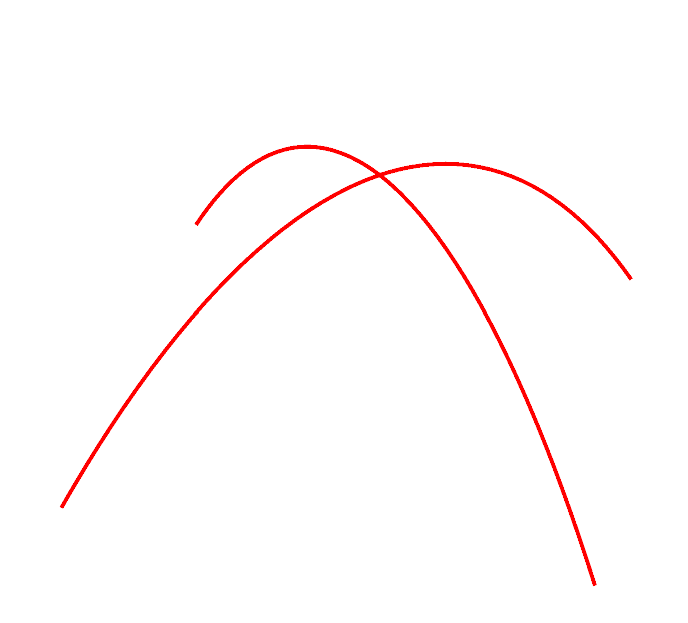}\qquad\qquad\qquad	\includegraphics[width =4.7cm]{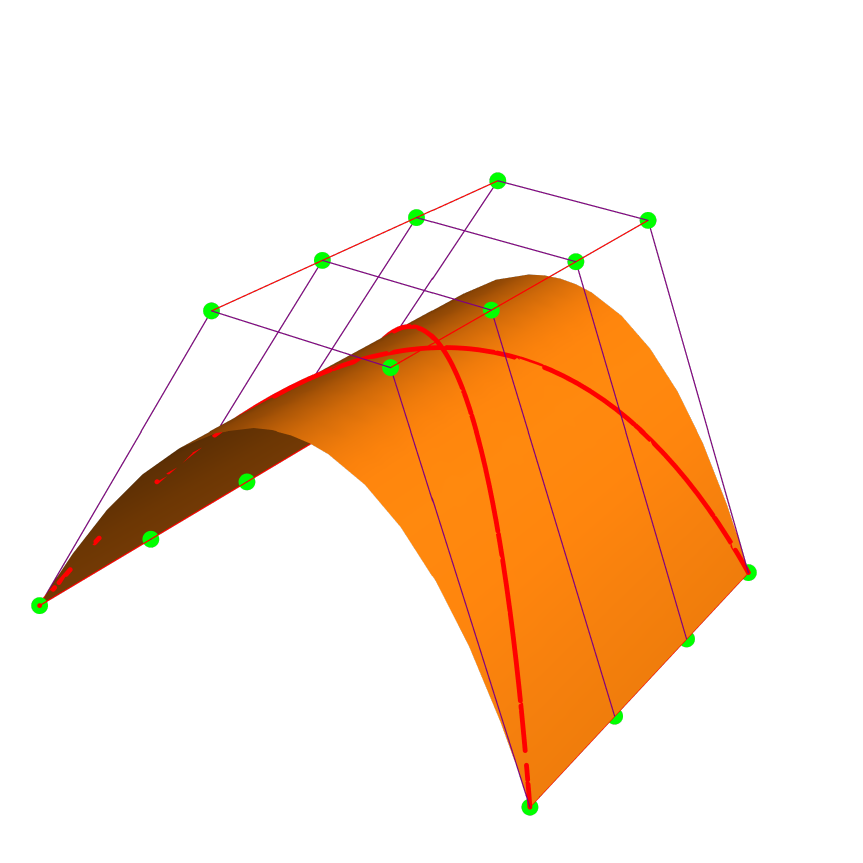}	\\[-1mm]
	\includegraphics[width =6.7cm]{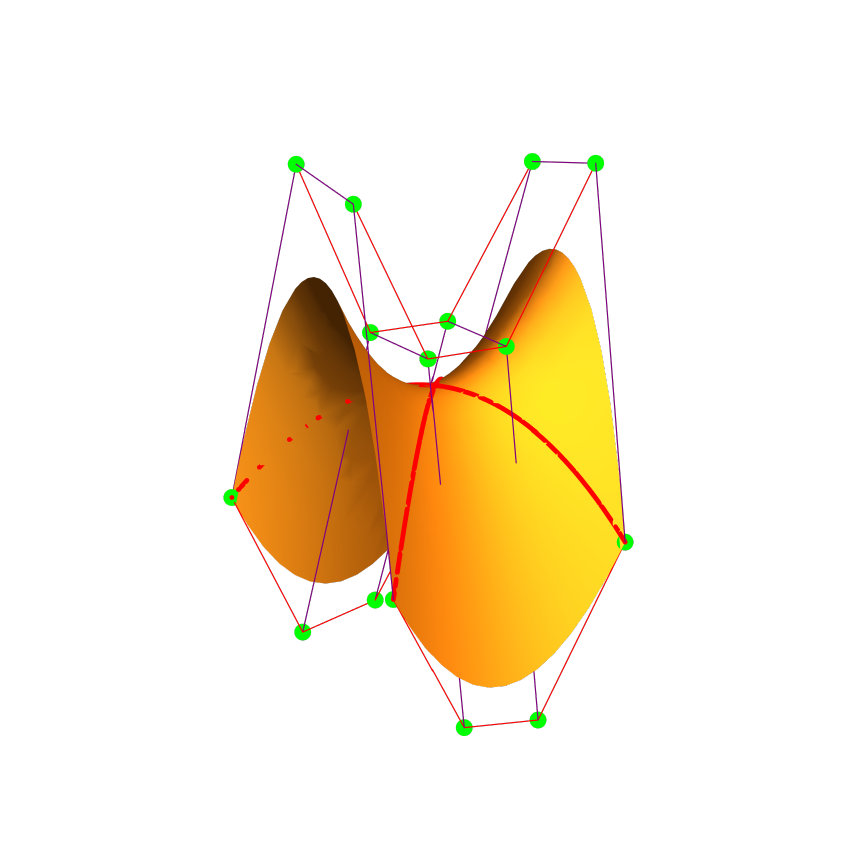}\, \includegraphics[width =6cm]{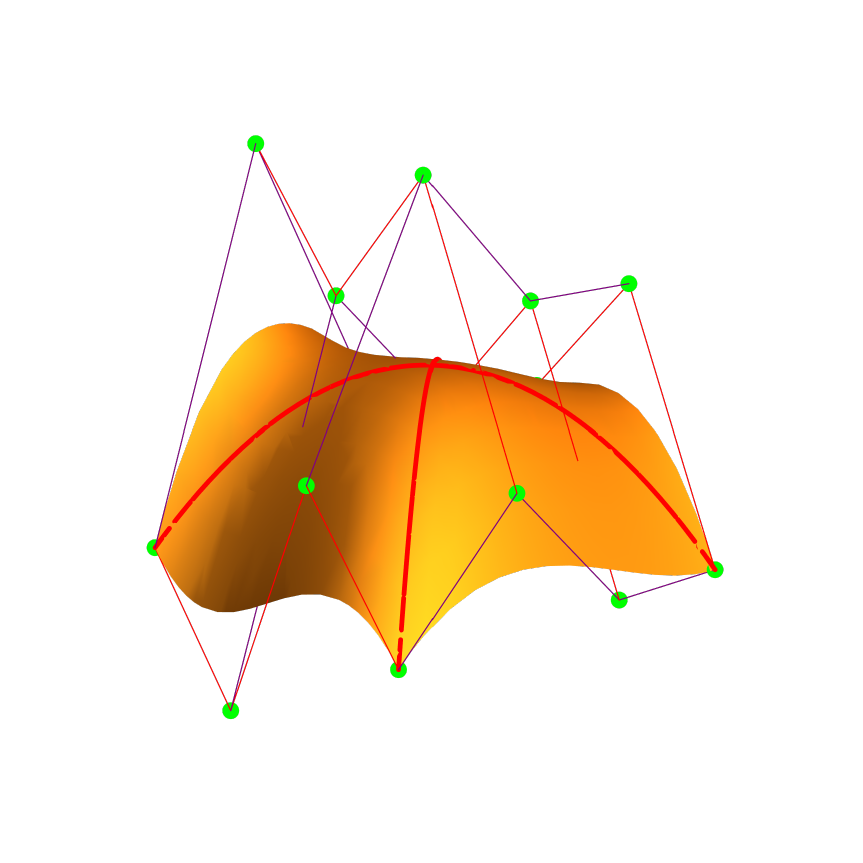}
	\caption{$n=3$. Three Bézier surfaces with the same prescribed main diagonals.}\label{fig}
\end{figure}
\end{center}
\section{Generation of a surface with a prescribed boundary and compatible diagonal curves}

Let us define another binary equivalence relation in the set of tensor product B\'ezier surfaces.

\begin{definition}
	Two tensor product B\'ezier surfaces of degree $n\times n$ are said to be boundary and diagonals equivalent if they have the same boundary and the same main diagonals curves.
\end{definition}

\begin{proposition}\label{boundarydiagonals}
	The class of boundary and diagonals equivalent surfaces is an affine subspace that can be parameterized by $(n-3)^2$ control points.
	
\end{proposition}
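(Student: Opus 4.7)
The plan is to determine the rank of the combined linear system consisting of the diagonal conditions in (\ref{the-system}) together with the $4n$ equations that fix the boundary control points, and then to read off the dimension as $(n+1)^2$ minus that rank.

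The first step is to split the $4n+2$ diagonal equations into two groups. Eight of them, namely those with $k\in\{0,1,2n-1,2n\}$ on each of the $D$- and $E$-sides, involve only corner control points and their two neighbours along the boundary (e.g.\ $D^n_0({\mathcal P})=P_{0,0}$ and $D^n_1({\mathcal P})=n(P_{0,1}+P_{1,0})$). Once the boundary is prescribed, these eight reduce to compatibility requirements on the given diagonal data ($Q_0=P_{0,0}$, $2Q_1=P_{0,1}+P_{1,0}$, and their symmetric counterparts), which are assumed to hold and therefore contribute no constraint on the $(n-1)^2$ interior control points. The remaining $4n-6$ equations, those with $k=2,\dots,2n-2$ on each side, become genuine linear constraints on the interior points once the boundary contributions are moved to the right hand side.

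The second step is to compute the rank of these $4n-6$ interior constraints. I would rerun the argument from the proof of Theorem \ref{two-necessary-conditions}, now restricted to the interior index square: an arbitrary vanishing linear combination $\sum_k \lambda_k D^n_k({\mathcal P})+\sum_k \mu_k E^n_k({\mathcal P})=0$ evaluated only on interior points $P_{r,s}$ with $1\le r,s\le n-1$ forces $\lambda_{r+s}+\mu_{n+r-s}=0$ for every such $(r,s)$. Separating by the parity of the indices and checking that the bipartite graph on $\{2,\dots,2n-2\}$ induced by the valid pairs is connected within each parity class leaves exactly the two relations already identified in (\ref{condicions-even-DE}) (and its odd-$n$ counterpart from the proof of Theorem \ref{two-necessary-conditions}). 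Hence the $4n-6$ interior equations have rank $4n-8$.

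Adding the $4n$ independent boundary equations yields a total rank of $8n-8$, so the affine solution space has dimension
$$(n+1)^2-(8n-8)=n^2-6n+9=(n-3)^2,$$
as claimed. The most delicate point is the connectivity check above: once the interior restriction $1\le r,s\le n-1$ is imposed, some pairs $(k,m)=(r+s,n+r-s)$ are removed from the bipartite graph linking the $\lambda$'s and the $\mu$'s, and one has to verify that enough pairs remain to force $\lambda_k$ (respectively $\mu_m$) to be constant on each parity class, so that no accidental extra dependency appears for $n\ge 3$. A direct combinatorial inspection confirms this, after which the dimension count above closes the proof.
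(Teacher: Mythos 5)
Your proof is correct and follows essentially the same route as the paper: discard the eight diagonal equations with $k\in\{0,1,2n-1,2n\}$ as mere compatibility conditions on the prescribed boundary, show that the remaining $4n-6$ equations have rank $4n-8$ when restricted to the interior control points, and count $(n-1)^2-(4n-8)=(n-3)^2$ (equivalently, your $(n+1)^2-(8n-8)$). If anything, you are more careful than the paper on the one delicate point---the paper simply asserts that the reduced system has rank $4(n-2)$ by appeal to the proof of Proposition~\ref{prop1}, whereas you correctly flag that the two-dependency (connectivity) argument must be re-verified after restricting to interior indices $1\le r,s\le n-1$.
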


\begin{proof}
	Initially, for a degree $n\times n$ B\'ezier surface with a prescribed boundary we have $(n-1)^2$ free interior control points. 
	However, since the main diagonal curves must start and end at the corner control points of the net and moreover, the starting and ending derivatives of the diagonal curves are also prescribed by the boundary points, that is to say,
\begin{equation}\label{corners}
	\begin{array}{rclcccl}
		Q_0 &=& P_{0,0}&\quad& Q_1 &=& \frac12\left(P_{1,0}+P_{0,1}\right)\\[1mm]
		Q_{2n-1} &=& \frac12\left(P_{n-1,n}+P_{n,n-1}\right)&\quad& Q_{2n} &=& P_{n,n}\\[1mm]
		
		R_0 &=& P_{n,0}&\quad& R_1 &=& \frac12\left(P_{n-1,0}+P_{n,1}\right)\\[1mm]
		R_{2n-1} &=& \frac12\left(P_{1,n}+P_{0,n-1}\right)&\quad&
		R_{2n} &=& P_{0,n}\\[1mm]
	\end{array}
\end{equation}
	then, the system in Eq. (\ref{the-system}), has eight fewer equations,
	\begin{equation}\label{linear-system-boundary}
		\left\{
		\begin{array}{rcl}
			D^n_k({\mathcal P } )&=&  \binom{2n}{k} Q_k, \qquad k=2,\dots, 2n-2\\[2mm]
			E^n_k({\mathcal P } ) &=& \binom{2n}{k} R_k, \qquad k=2,\dots, 2n-2
		\end{array}.
		\right.
	\end{equation}
	We show in the proof of Proposition \ref{prop1}, that the rank of the system in Eq. (\ref{the-system}) is $4n$, so now the rank of the reduced system, in Eq. (\ref{linear-system-boundary}), is $4(n-2)$. Therefore, the number of free control points is $(n-1)^2 - 4(n-2) = (n-3)^2$.
	
\end{proof}

According to this result, for degrees $n=2$ and $n=3$, if we prescribe the boundary and the main diagonal curves, the B\'ezier surface is totally determined. The situation changes for $n\ge 4$. For example, for $n=4$ there is a family of B\'ezier surfaces, which is parameterized by just one interior control point, in such a way that any element of this family has the same prescribed boundary and diagonal curves.

Let us give some low degree examples in order to show the conditions that two control polygons ${\mathcal Q}$ and ${\mathcal R}$ must meet in order to be those of the diagonal curves of a B\'ezier surface with a prescribed boundary.

\begin{example}
	For $n=3$. 	When the boundary is prescribed the endpoints of the diagonal curves and the respective tangent lines at these endpoints are fixed, namely $Q_0, Q_1, Q_{5}, Q_{6}$ and $R_0,R_1, R_{5},R_{6}$, see Eq. (\ref{corners}). Therefore, the diagonal curves can be controlled by $Q_2,Q_4$ and $R_2,R_4$, which can be interpreted as meaning that the curvature at the endpoints remains free.
	Hence, we fix the boundary of the B\'ezier surface as follows:
	$$\left(
	\begin{array}{cccc}
		P_{0,0} & P_{0,1} & P_{0,2} & P_{0,3} \\[1mm]
		P_{1,0} & \frac{1}{3} \left(5 Q_2-P_{0,2}-P_{2,0}\right) & \frac{1}{3} \left(5 R_2-P_{0,1}-P_{2,3}\right) & P_{1,3} \\[1mm]
		P_{2,0} & \frac{1}{3} \left(5 R_4-P_{1,0}-P_{3,2}\right) & \frac{1}{3} \left(5 Q_4-P_{1,3}-P_{3,1}\right) & P_{2,3} \\[1mm]
		P_{3,0} & P_{3,1} & P_{3,2} & P_{3,3} \\
	\end{array}
	\right)
	$$
and using Corollary \ref{corollary-two-necessary-conditions} and the conditions at the corners, in Eq. (\ref{corners}), the control polygons of the main diagonals are
	$${\mathcal Q} = \left\{P_{0,0},\frac{1}{2} \left(P_{0,1}+P_{1,0}\right),Q_2,Q_3,Q_4,\frac{1}{2} \left(P_{2,3}+P_{3,2}\right),P_{3,3}\right\}$$
	$$
	{\mathcal R} =\left\{P_{0,3},\frac{1}{2} \left(P_{0,2}+P_{1,3}\right),R_2,R_3,R_4,\frac{1}{2} \left(P_{2,1}+P_{3,1}\right),P_{3,0}\right\},
	$$
	where
	$$
	\aligned &Q_3=\frac{3}{4}\left( R_2+ R_4\right)+\frac{1}{20} \left(P_{0,3}+P_{3,0}\right)-\frac{3}{20} \left(P_{0,1}+P_{1,0} +P_{2,3}+P_{3,2}\right)\\
	&R_3= \frac34(Q_2+Q_4)+\frac{1}{20} \left(P_{0,0}+P_{3,3}\right)-\frac{3}{20} \left(P_{0,2}+ P_{1,3}+ P_{2,0}+ P_{3,1}\right).
	\endaligned
	$$

Therefore, if we fix the boundary of a cubic B\'ezier curve and its (compatible) diagonals, then the surface is fully determined.
		
\end{example}

In all the following figures the control points that can be freely prescribed are represented by large points, while those that are determined (by the large ones) are represented by small points. The Bézier surface control net is shown in green and the diagonals control points in red. 

For example, in Figure \ref{fig4}, the prescribed boundary control points are, large green points, and the diagonal curves that can be controlled by $Q_2,Q_4$ and $R_2,R_4$, are large red points. The remaining control points in the net and in the diagonals polygons are determined and are represented by small points.
\begin{center}
\begin{figure}[h!]\centering
\includegraphics[width =5.7cm]{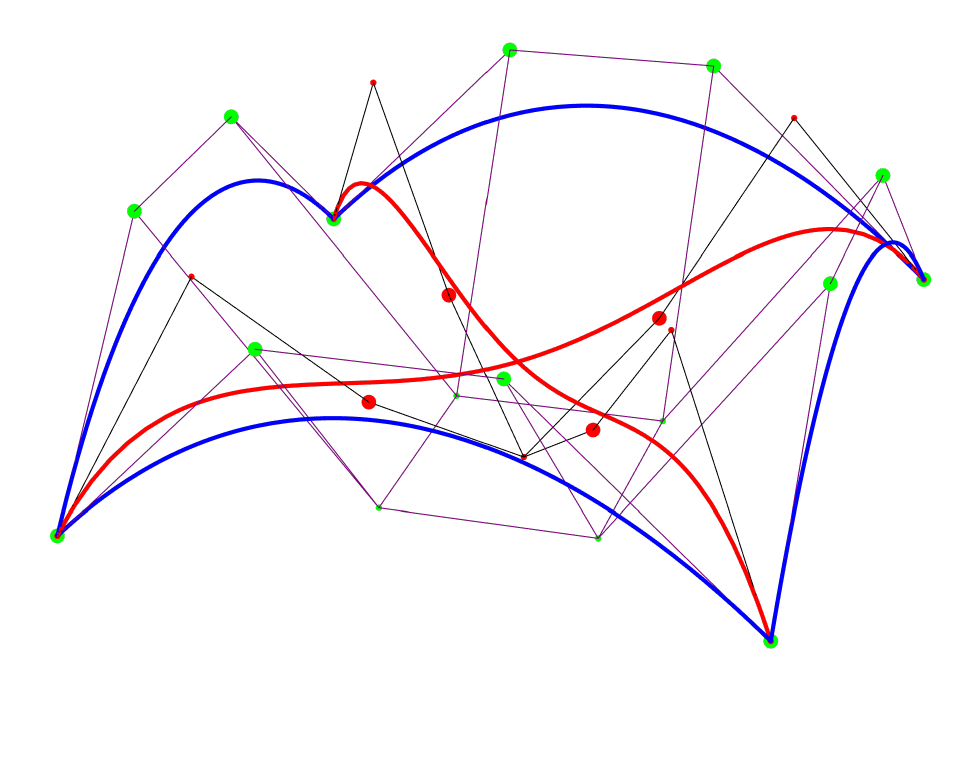}\qquad\qquad\includegraphics[width =5.8cm]{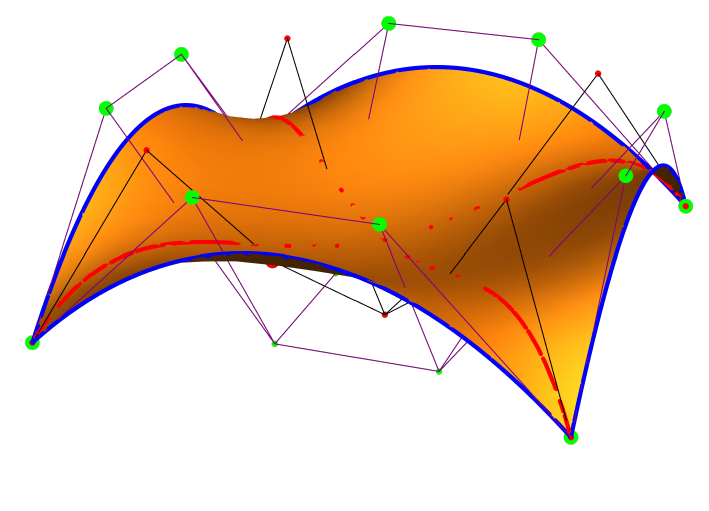}		
		\caption{$n=3$. Prescribed diagonal curves and boundary (properly related).}\label{fig4}
	\end{figure}
\end{center}

\begin{example}
	For $n=4$. If the boundary of the B\'ezier surface is prescribed, then the interior control points are
	
	$$\left(
	\begin{array}{ccc}
		\frac{7 Q_2}{4}-\frac{3}{8} \left(P_{0,2}+P_{2,0}\right) & \mathbf{P_{1,2}} & \frac{7 R_2}{4}-\frac{3}{8} \left(P_{0,2}+P_{2,4}\right) \\[1mm]
		\frac{1}{6} \left(-P_{0,3}-P_{3,0}+14 Q_3\right)- \mathbf{P_{1,2}} & P_{2,2}  &  P_{2,3}\\[1mm]
		\frac{7 R_6}{4}-\frac{3}{8} \left(P_{2,0}+P_{4,2}\right) & P_{3,2} & \frac{7 Q_6}{4}-\frac{3}{8} \left(P_{2,4}+P_{4,2}\right) \\
	\end{array}
	\right)$$
where
$$
\aligned 
&P_{2,2}=\frac{1}{36} \left(-P_{0,0}+6 P_{0,2}+6 P_{2,0}+6 P_{2,4}+6 P_{4,2}-P_{4,4}-28 Q_2-28 Q_6+70 R_4\right),\\
&P_{3,2}=\frac{1}{6} \left(P_{0,3}-P_{1,0}+P_{3,0}-P_{4,3}-14 Q_3+14 R_5\right)+\mathbf{P_{1,2}},\\
&P_{2,3}=\frac{1}{6} \left(-P_{0,3}+P_{1,0}-P_{1,4}-P_{3,0}-P_{4,1}+P_{4,3}+14 Q_3+14 Q_5-14 R_5\right)- \mathbf{P_{1,2}}.
\endaligned	
$$
	
Let us remark that, in this case, the diagonals are controlled by the eight control points $Q_2,Q_3,Q_5,Q_6,$ $R_2, R_4, R_5,R_6$ but, in addition, $P_{1,2}$ is free. The two diagonals do not depend on $P_{1,2}$. Indeed, the control polygons are
$$\begin{array}{lll}
		{\mathcal Q} &=\left\{P_{0,0},\frac{1}{2} \left(P_{0,1}+P_{1,0}\right),Q_2,Q_3,Q_4,Q_5,Q_6,\frac{1}{2} \left(P_{3,4}+P_{4,3}\right),P_{4,4}\right\},\\
		{\mathcal R} &=\left\{P_{0,4},\frac{1}{2} \left(P_{0,3}+P_{1,4}\right),R_2,R_3,R_4,R_5,R_6,\frac{1}{2} \left(P_{3,0}+P_{4,1}\right),P_{4,0}\right\},\end{array}
$$
with
	$$\aligned &Q_4= R_4+ \frac{2}{5}\left(R_2+ R_6- Q_2- Q_6 \right)+\frac{1}{70} \left(-P_{0,0}+P_{0,4}+P_{4,0}-P_{4,4}\right),\\
	&R_3=\frac{1}{14} \left(P_{0,1}-P_{0,3}+P_{1,0}-P_{1,4}-P_{3,0}+P_{3,4}-P_{4,1}+P_{4,3}\right)+ Q_3+ Q_5- R_5.
	\endaligned$$

\begin{center}	
\begin{figure}[h!]\centering
\includegraphics[width=4.5cm]{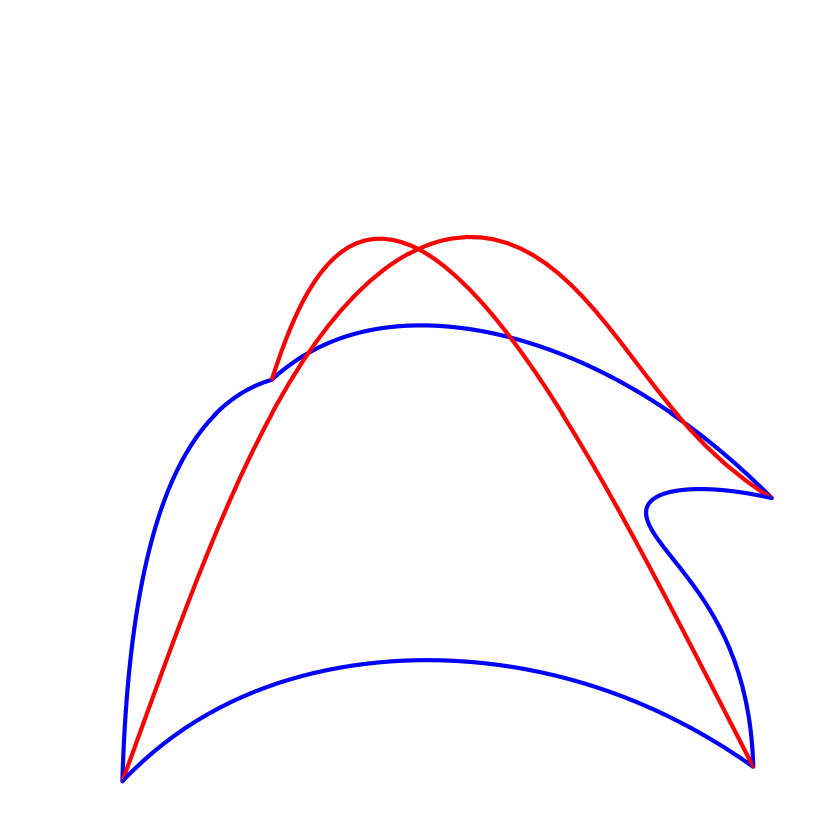}\qquad\quad\quad
		\includegraphics[width=4cm]{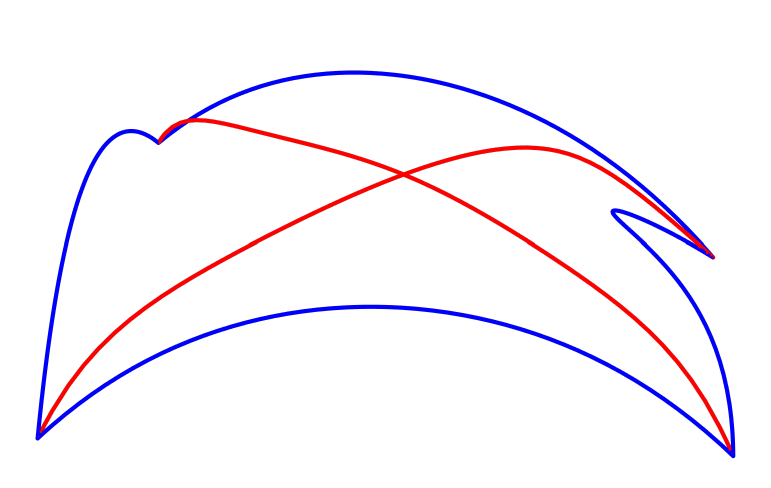}\\
\includegraphics[width =4.4cm]{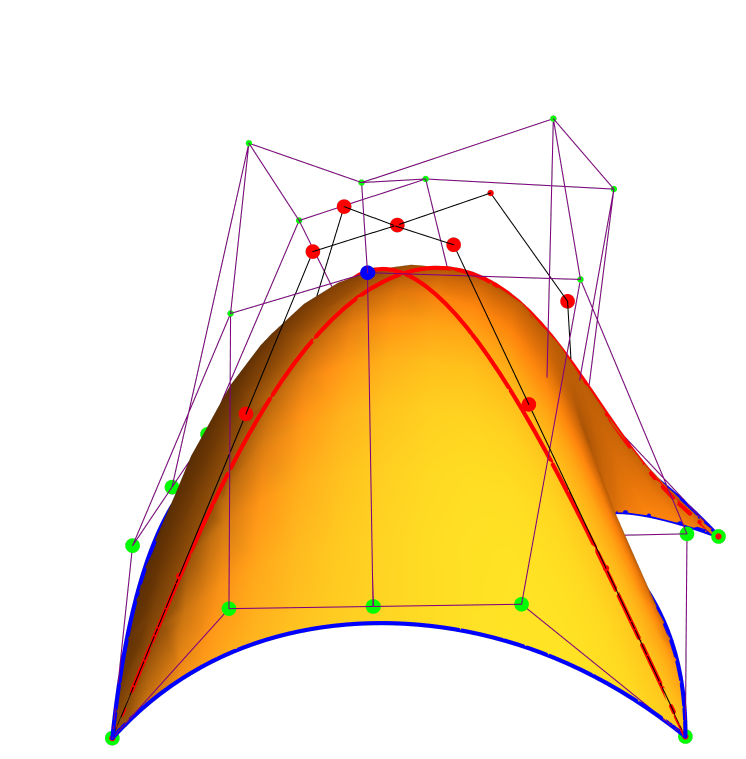}\qquad\quad\qquad\includegraphics[width =4cm]{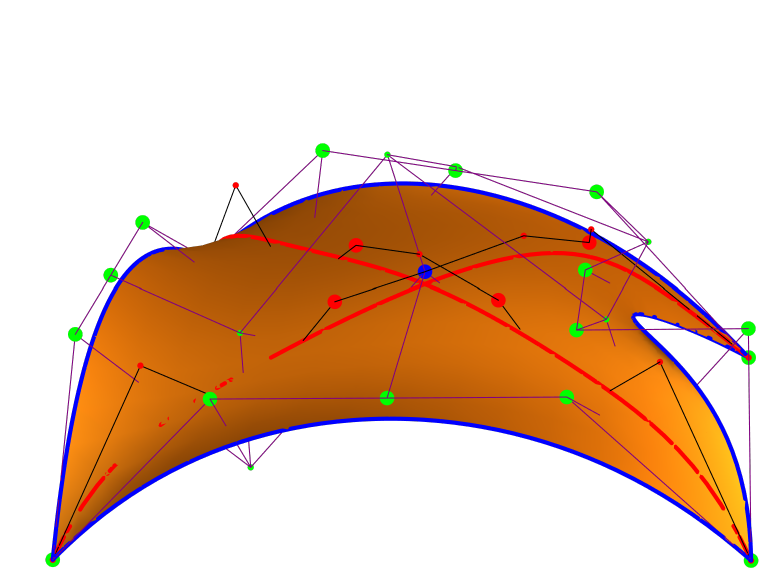}\\
\qquad\includegraphics[width=4.7cm]{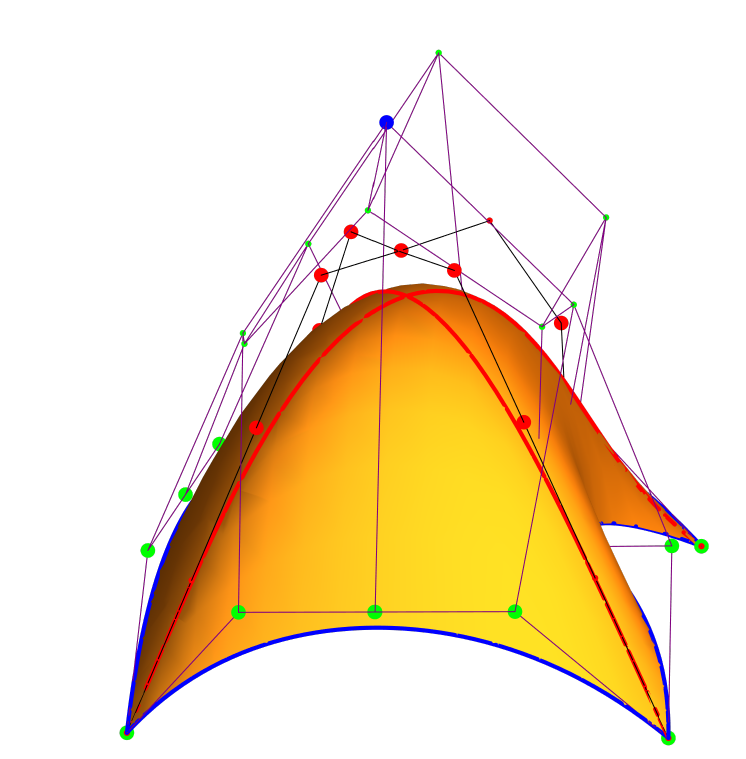}\qquad\quad\qquad\includegraphics[width=4cm]{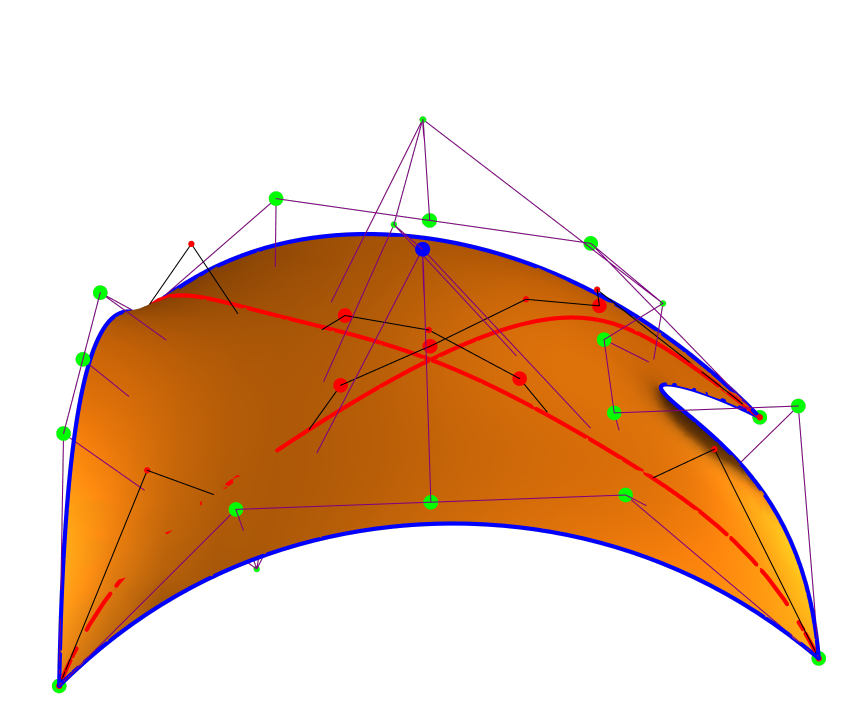}	
\caption{$n=4$. Same boundary curves for all figures but two different pairs of main diagonals. In each column, with the same diagonal curves, two surfaces are obtained by changing the free control point $P_{1,2}$, in blue.}
	\end{figure}
\end{center}		
\end{example}

\section{Generation of a surface with a prescribed boundary and tangent planes along it with compatible diagonal curves}

Let us define the following binary equivalence relation in the set of tensor product B\'ezier surfaces.

\begin{definition}
	Two tensor product B\'ezier surfaces of degree $n\times n$ are said to be $C^1$-boundary and diagonals equivalent if they have the same boundary, the same tangent planes along it and the same main diagonal curves.
\end{definition}

\begin{proposition}
		The class of  $C^1$-boundary and diagonals equivalent surfaces is an affine subspace that can be parameterized by $(n-5)^2$ control points.
\end{proposition}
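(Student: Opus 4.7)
The plan is to imitate the proof of Proposition~\ref{boundarydiagonals} step by step, tracking how many diagonal equations and net variables are forced or freed when the boundary constraint is upgraded from $\mathcal{C}^0$ to $\mathcal{C}^1$. Prescribing $\mathcal{C}^1$-boundary data freezes the two outermost rows and columns of the control net, i.e.\ every $P_{i,j}$ with $i\in\{0,1,n-1,n\}$ or $j\in\{0,1,n-1,n\}$; the still-free interior points form the grid with $2\le i,j\le n-2$, and hence number $(n-3)^2$. This is the baseline count before imposing any diagonal constraints.

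Next, one identifies the diagonal equations that are now automatically satisfied. By Lemma~\ref{lemma-2}, $Q_k$ depends only on the $P_{i,j}$ with $i+j=k$, and analogously for $R_k$. Every control point entering $Q_k$ for $k\in\{0,1,2,3\}$ has $i\le 1$ or $j\le 1$ and therefore lies in the frozen region; the mirrored statement covers $k\in\{2n-3,\dots,2n\}$, and the same holds for $R_k$. Hence $8$ equations for $\{Q_k\}$ and $8$ for $\{R_k\}$ are automatic, and system~(\ref{the-system}) reduces to
$$
D^n_k(\mathcal{P})=\binom{2n}{k}Q_k,\qquad
E^n_k(\mathcal{P})=\binom{2n}{k}R_k,\qquad k=4,\dots,2n-4,
$$
a linear system of $2(2n-7)=4n-14$ equations in the $(n-3)^2$ remaining unknowns.

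It then remains to compute the rank of this reduced system, which is the heart of the argument. Repeating the coefficient-comparison used in the proof of Theorem~\ref{two-necessary-conditions}, any vanishing linear combination $\sum_{k=4}^{2n-4}(\lambda_k D^n_k+\mu_k E^n_k)$, expressed in the truly interior unknowns $P_{r,s}$ with $2\le r,s\le n-2$, forces $\lambda_{r+s}+\mu_{n+r-s}=0$ for every admissible interior pair $(r,s)$. For $n\ge 5$ these relations still chain together all $\lambda_k$ (resp.\ $\mu_k$) sharing a given parity of index, leaving exactly the same two parity-based dependencies as in Theorem~\ref{two-necessary-conditions}. Therefore the rank of the reduced system equals $4n-14-2=4(n-4)$, while the compatibility of its right-hand side is guaranteed by Theorem~\ref{two-necessary-conditions} applied to the given admissible pair of diagonal curves. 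The affine space of solutions thus has dimension $(n-3)^2-4(n-4)=(n-5)^2$, as announced.

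The step requiring the most care is the connectivity argument hidden in the rank count: one must verify that restricting simultaneously the equation indices to $k\in[4,2n-4]$ and the variable indices to $2\le r,s\le n-2$ does not allow any new vanishing combinations beyond the two parity ones. For $n\ge 5$ this is routine, since moving $(r,s)$ through the interior grid by single unit steps identifies, within each parity, all $\lambda_k$'s with each other and all $\mu_k$'s with each other; once that is established, the remainder of the argument is a direct transcription of Proposition~\ref{boundarydiagonals}.
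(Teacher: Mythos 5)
Your proof is correct and follows essentially the same route as the paper's: you identify the $16$ equations of system (\ref{the-system}) with $k\le 3$ or $k\ge 2n-3$ as involving only $C^1$-prescribed control points, and conclude that the reduced system in the $(n-3)^2$ interior unknowns has rank $4(n-4)$, giving dimension $(n-5)^2$. In fact you justify the rank drop more carefully than the paper does (which simply asserts it), by rerunning the parity/coefficient argument from the proof of Theorem \ref{two-necessary-conditions} on the interior grid $2\le r,s\le n-2$ and noting the restriction $n\ge 5$ needed for the two parity classes to survive.
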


\begin{proof} For a degree $n\times n$ B\'ezier surface with a prescribed $C^1$-boundary we have $(n-3)^2$ free interior control points, that is to say, the boundary control points and their neighboring lines in the net are prescribed. If we also prescribe the two main diagonals, these $(n-3)^2$ control points are related into a compatible linear system of rank $4(n-4)$, since $16$ equations from the initial system with rank  $4n$, in Eq. \ref{the-system},  only relate prescribed points. Therefore, the number of free control points is $(n-3)^2 - 4(n-4) = (n-5)^2$.
\end{proof}

According to this result and similarly to the previous work, if we prescribe the boundary, the tangent planes along it and the two diagonal curves, then the smallest degree surfaces with free control points are degree $6 \times 6$ surfaces. For $n=6$ there is a family of B\'ezier surfaces, parameterized by just one interior control point such that, any element of this family has the same prescribed boundary and tangent plane along it and the same prescribed diagonal curves.

As before, we give some low degree examples.

\begin{example}
	For $n=5$.	If we fix the boundary, the tangent planes along it and the main diagonals we determine the whole surface. Corollary \ref{corollary-two-necessary-conditions} determines the central control points, $Q_5$ and $R_5$, of the main diagonals and moreover $Q_0,Q_1,Q_2,Q_3,Q_7,Q_8,Q_9,Q_{10}$ and $R_0,R_1,R_2,R_3,R_7,R_8,R_9,R_{10}$  depend on prescribed points of the control net, see Eq. (\ref{QR_P}).
	The interior $(n-3)^2$ control points of the net are as follows:
	$$
	\left(
	\begin{array}{cc}
		\frac{1}{20} \left(42 Q_4-P_{0,4}-10 P_{1,3}-10 P_{3,1}-P_{4,0}\right) & \frac{1}{20} \left(42 R_4-P_{0,1}-10 P_{1,2}-10 P_{3,4}-P_{4,5}\right) \\
		\frac{1}{20} \left(42 R_6-P_{1,0}-10 P_{2,1}-10 P_{4,3}-P_{5,4}\right) & \frac{1}{20} \left(42 Q_6-P_{1,5}-10 P_{2,4}-10 P_{4,2}-P_{5,1}\right) \\
	\end{array}
	\right),
	$$
where $Q_4, Q_6$ and $R_4, R_6$ are the way of controlling the shape of the diagonal curves.
	
	\begin{figure}[h!]\centering
\includegraphics[width =5cm]{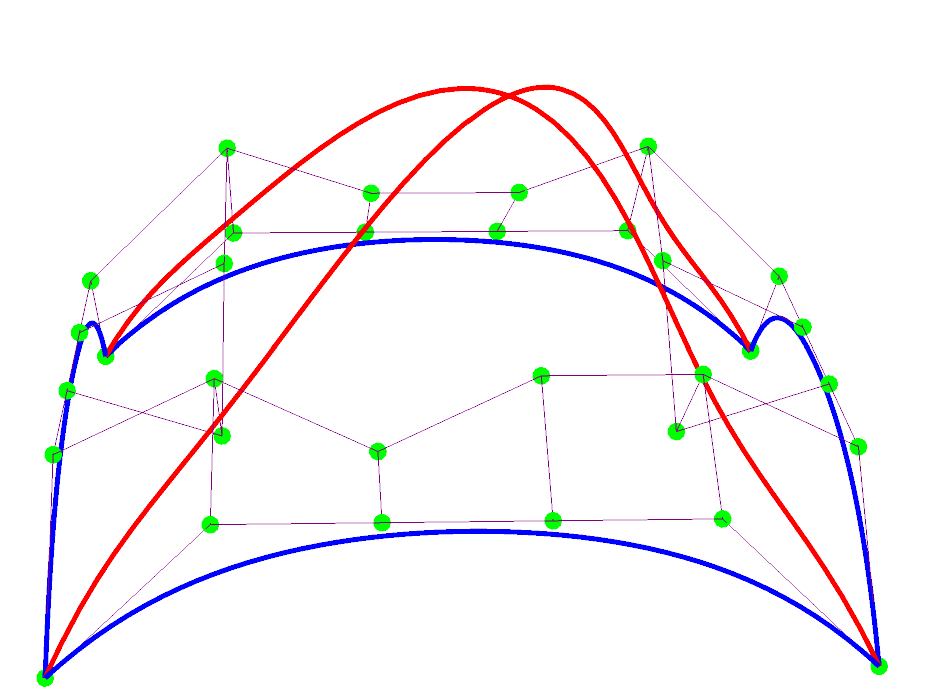}\quad\quad\quad\quad \includegraphics[width=5cm]{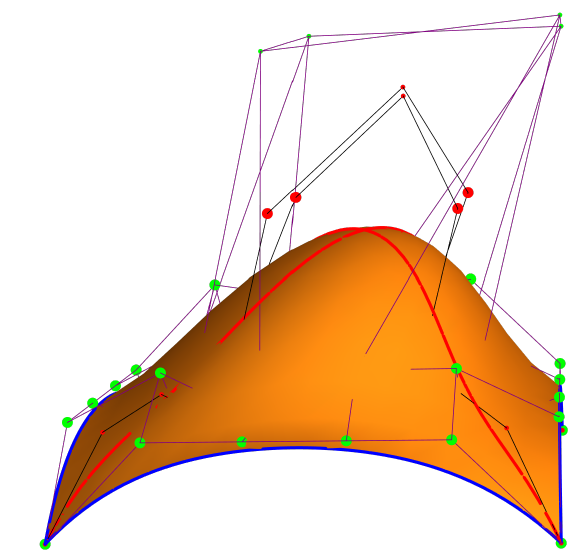}
\caption{$n=5$. Prescribed boundary and tangent planes along the boundary, shown as large green points, and the control points $Q_4, Q_6$ and $R_4,  R_6$, shown as large red points, let us control the shape of the diagonals. Again after prescribing the $C^1$-boundary control points and the diagonal curves the whole surface is determined.}\label{c15}
	\end{figure}
\end{example}

\begin{example}
	For $n=6$.	There is a one parameter family of surfaces with the same boundary and tangent plane along it and with the same diagonal curves. In other words, the subspace of  $C^1$-boundary and diagonals equivalent $6\times6$ Bézier surfaces can be parameterized by one control point, meaning that the dimension of the subspace is one.  The interior $(n-3)^2$ control points are
	$$
	\aligned
	&P_{2,2}=\frac{1}{15} \left(33 Q_4-P_{0,4}-8 P_{1,3}-8 P_{3,1}-P_{4,0}\right), \\
	&P_{2,4}= \frac{1}{15} \left(33 R_4-P_{0,2}-8 P_{1,3}-8 P_{3,5}-P_{4,6}\right), \\
	&P_{3,2}=
	\frac{1}{50} \left(132 Q_5-P_{0,5}-15 P_{1,4}-15 P_{4,1}-P_{5,0}\right)-{\bf P_{2,3}} ,\\
	&P_{3,3}=  \frac{1}{400} \left(924 R_6-495 Q_4-495 Q_8-P_{0,0}+15 P_{0,4}-36 P_{1,1}+120 P_{1,3}+15 P_{2,6}\right.\\
	&\left.\qquad+120 P_{3,1}+120 P_{3,5}+15 P_{4,0}+120 P_{5,3}-36 P_{5,5}+15 P_{6,2}-P_{6,6}\right), \\
	&P_{3,4}=  \frac{1}{50} \left(132 R_5-P_{0,1}-15 P_{1,2}-15 P_{4,5}-P_{5,6}\right)-{\bf P_{2,3}},\\
	&P_{4,2}=
	\frac{1}{15} \left(33 R_8-P_{2,0}-8 P_{3,1}-8 P_{5,3}-P_{6,4}\right), \\
	&P_{4,3}= {\bf P_{2,3}}+ \frac{1}{50} \left(132 Q_7-132 R_5+P_{0,1}+15 P_{1,2}-P_{1,6}-15 P_{2,5}+15 P_{4,5}\right.\\
	&\left.\qquad	-15 P_{5,2}+P_{5,6}-P_{6,1}\right), \\
	&P_{4,4}=  \frac{1}{15} \left(33 Q_8-P_{2,6}-8 P_{3,5}-8 P_{5,3}-P_{6,2}\right), \\
	\endaligned
	$$
which depend on the free parameter $P_{2,3}$, $Q_4, Q_5, Q_7, Q_8$ and $R_4, R_5, R_6,  R_8$ are the diagonal curves control points that can be prescribed.
	
\begin{center}	\begin{figure}[h!]
	
\includegraphics[width =3.5cm]{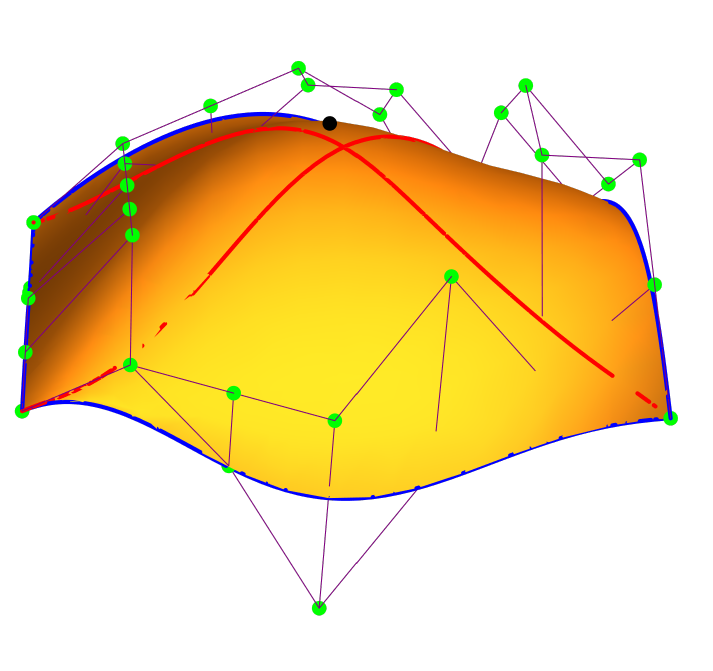}\quad\includegraphics[width =4.25cm]{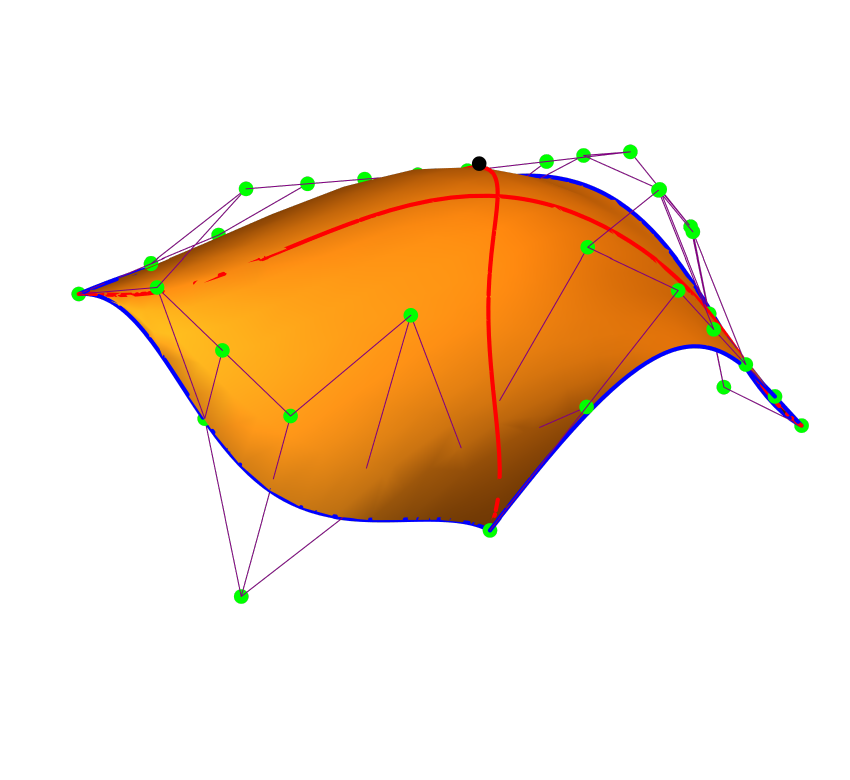}\quad\includegraphics[width =4.25cm]{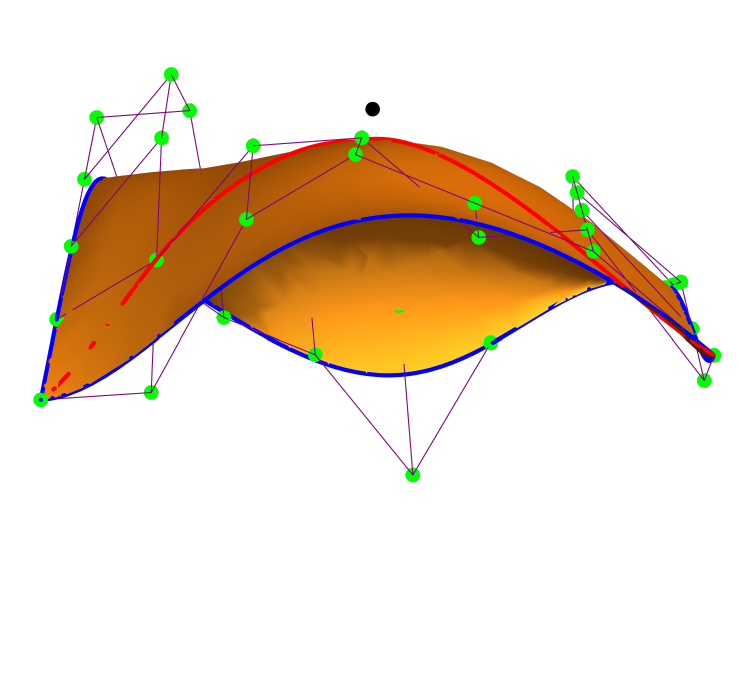}
		\centering
		\caption{$n=6$. Given a boundary, the tangent plane along it and the diagonal curves of a Bézier surface, $Q_4, Q_5, Q_7, Q_8$ and $R_4, R_5, R_6, R_8$ being the diagonal curves control points that can be prescribed, there is still one free control point, $P_{2,3}$ (black). }\label{c16}
	\end{figure}
\end{center}

\end{example}

After studing all these different methods for generating surfaces from prescribed information let us make only a comment about multi-patch surfaces. Two patches could be, of course, continously connected along matching boundaries, and they could also be $C^1$ connected when we prescribe tangent planes along the boundaries of neighbour patches and they agree. But the thing is that, since the tangent lines at the diagonal curves at the corner points depend on the boundary control points, see Eq. (\ref{QR_P}), if certain continuity conditions on the diagonal curves are wanted, it implies that conditions on the prescribed boundary and tangent planes must be imposed.  A detailed study of this issue, but out of our scope here, shows that if we want the diagonals to be well connected, then the connection of the patches at the common point to four patches must be $C^1$.

In short, if we consider the problem of multi-patches and we want a good connection of diagonal curves, this entails some conditions on the boundary and the tangent planes.

\section{Conclusion}

First, in this paper, we study the necessary and sufficient conditions that two curves must verify in order to constitute the diagonal curves of a tensor product B\'ezier surface, which allows us to determine tensor product B\'ezier surfaces with prescribed diagonal curves.

The possibility of controlling the diagonals would increase the control of the resulting surface, so, the goal we achieve here is to provide a method that makes it possible to prescribe not only the diagonal curves of the surface but also its boundary, or its boundary and the tangent planes along it.

Whereas for low degrees, it is easy to check that there is only one B\'ezier surface with prescribed main diagonals and boundary, this is no longer true for higher degrees.  For fixed diagonals and boundary not a unique Bézier surface but a family of Bézier surfaces is determined. Such a family has the structure of an affine subspace whose dimension is also determined here.

In other words, there is an equivalence class of B\'ezier surfaces that have the same diagonals, or the same diagonals and boundary as well as having the same diagonals and meeting the same $C^1$-boundary conditions.

In fact, what we are able to prove is that the set of all degree $n\times n$ B\'ezier surfaces where both the diagonal curves and the boundary are prescribed is an affine space of dimension $(n-3)^2$ whereas the set of surfaces with prescribed diagonals and $C^1$-boundary conditions has the dimension $(n-5)^2$. Therefore, although only for degrees $n\le 3$ and $n\le5$, respectively, a unique B\'ezier surface exists as a solution. For higher degrees, there are free control points, that could also be determined by functional minimization, considering functionals defined on the surface, such as the Dirichlet functional.

\section*{Acknowledgements}
This work was partially supported by Grant PGC2018-094889-B-100 funded by MCIN/AEI/ 10.13039/\newline 501100011033 and by ``ERDF A way of making Europe''.
The first author has  also been funded by Ministerio de Ciencia e Innovaci\'on  (Spain) through project PID2019-104927GB-C21, and by Universitat Jaume I (grants UJI-B2019-17 and GACUJI/2020/05). 

We thank the referees for their careful reading of our manuscript and for taking the time to comment on it and offering constructive suggestions.

\end{document}